\documentclass[lettersize,journal]{IEEEtran}
\usepackage{cite}
\usepackage{amsmath,amssymb,amsfonts}
\usepackage{bm}
\usepackage{graphicx}
\usepackage{textcomp}
\usepackage{xcolor}
\def\BibTeX{{\rm B\kern-.05em{\sc i\kern-.025em b}\kern-.08em
    T\kern-.1667em\lower.7ex\hbox{E}\kern-.125emX}}

\usepackage{subfig}
\usepackage{tikz}
\usepackage[nolist]{acronym}
\usepackage{pgfgantt}
\usepackage[ruled,linesnumbered]{algorithm2e}
\usepackage{pgf}
\usepackage{enumitem}
\usepackage{amsthm}
\usepackage{multirow}
\usepackage{booktabs}
\usepackage{rotating}

\usetikzlibrary{decorations.pathmorphing}
\tikzset{snake it/.style={decorate, decoration={snake, amplitude=0.5mm, segment length=1mm}}}
\usetikzlibrary{arrows, positioning}
\usetikzlibrary{patterns}
\usetikzlibrary{automata, positioning}
\usepackage[capitalize]{cleveref}

\newtheoremstyle{upright}   
  {6pt}                     
  {6pt}                     
  {\normalfont}             
  {}                        
  {\bfseries \itshape}               
  {:}                       
  { }                       
  {}                        

\theoremstyle{upright}

\crefformat{algorithm}{Alg.~#2#1#3}

\newtheorem{definition}{Definition}
\crefformat{definition}{Def.~#2#1#3}
\newtheorem{listing}{Listing}
\crefformat{listing}{Lst.~#2#1#3}
\newtheorem*{example}{Example}

\newtheorem{theorem}{Theorem}

\newcounter{rownumber}[listing] 
\renewcommand{\therownumber}{\arabic{rownumber}}

\newcommand{\rowlabel}[1]{\refstepcounter{rownumber}{\footnotesize\textbf{\therownumber}}\label{#1}}

\crefname{rownumber}{Ln.}{Lns.}
\Crefname{rownumber}{Ln.}{Lns.}
\crefname{line}{Ln.}{Lns.}
\Crefname{line}{Ln.}{Lns.}
\newcommand{\revised}[1]{{#1}}

\newcommand{\NonNegativeIntegers}{\mathbb{N}_{0}}
\newcommand{\Reals}{\mathbb{R}}

\newcommand{\RationalsNonNegative}{\mathbb{Q}^+_{0}}

\newcommand{\SetCard}[1]{{|#1|}}

\newcommand{\graph}{g}

\newcommand{\InitialTokens}{\delta}

\DeclareMathAlphabet{\mathbbmsl}{U}{bbm}{m}{sl}

\newcommand{\SetActors}{A}
\newcommand{\SetChannels}{C}

\newcommand{\actor}{a}
\newcommand{\channel}{c}

\newcommand{\IdlePower}[1]{p^\mathrm{idl}(\actor_{#1})}
\newcommand{\ExecutionTime}[1]{\ell^\mathrm{exe}(\actor_{#1})}
\newcommand{\ExecutionPower}[1]{p^\mathrm{exe}(\actor_{#1})}
\newcommand{\ShutdownDelay}[1]{\ell^\mathrm{shd}(\actor_{#1})}
\newcommand{\ShutdownPower}[1]{p^\mathrm{shd}(\actor_{#1})}
\newcommand{\SleepPower}[1]{p^\mathrm{slp}(\actor_{#1})}
\newcommand{\WakeupDelay}[1]{\ell^\mathrm{wkp}(\actor_{#1})}
\newcommand{\WakeupPower}[1]{p^\mathrm{wkp}(\actor_{#1})}

\newcommand{\EnergyAlwaysActiveActor}[1]{E^{\mathrm{AA}}(P,\actor_{#1})}
\newcommand{\EnergySelfPoweredActor}[1]{E^{\mathrm{SP}}(P,\actor_{#1})}

\newcommand{\SchedStartTime}[1]{\tau_{#1}}
\newcommand{\SchedStartTimes}{\boldsymbol{\tau}}

\newcommand{\DecisionVariable}[1]{x_{#1}}
\newcommand{\DecisionVector}{\mathbf{x}}

\newcommand{\ParetoFront}{S}
\newcommand{\ParetoRef}{\ParetoFront_\mathrm{Ref}}
\newcommand{\ParetoApp}{\ParetoFront_\mathrm{App}}

\newacro{AA}{Always-Active}
\newacro{AEC}{Acoustic Echo Cancellation}
\newacro{CC}{Clock Cycle}
\newacro{CSDF}{Cyclo-Static Dataflow}
\newacro{DFG}{Dataflow Graph}
\newacro{DPM}{Dynamic Power Management}
\newacro{DSE}{Design Space Exploration}
\newacro{DVS}{Dynamic Voltage Scaling}
\newacro{HLS}{High-Level Synthesis}
\newacro{FIFO}{First In, First Out}
\newacro{ILP}{Integer Linear Program}
\newacro{IoT}{Internet of Things}
\newacro{LP}{Linear Program}
\newacro{MILP}{Mixed-Integer-Linear-Program}
\newacro{MOP}{Multi-objective Optimization Problem}
\newacro{SoC}{Systems on Chip}
\newacro{SP}{Self-Powered}
\newacro{RTL}{Register-Transfer Level}
\newacro{SDF}{Synchronous Dataflow}
\newacro{SDFG}{Synchronous Dataflow Graph}
\usepackage{glossaries}
\makeglossaries
\newacronym{hnf}{H\&S}{Hop and Skip}

\usepackage{xcolor}

\definecolor{pastel-blue}{HTML}{87CEEB}
\definecolor{pastel-red}{HTML}{FD7F6F}
\definecolor{pastel-orange}{HTML}{FFB55A}
\definecolor{pastel-yellow}{HTML}{FDED8C}
\definecolor{pastel-green}{HTML}{B2E061}
\definecolor{pastel-lavender}{HTML}{BEB9DB}

\begin{document}

\title{Exploration of Energy and Throughput Tradeoffs for Dataflow Networks

\thanks{This work has been partially funded by the Deutsche Forschungsgemeinschaft (DFG, German Research Foundation) under project grant 530178246.}

\author{\IEEEauthorblockN{Abrarul Karim, Joachim Falk, and Jürgen Teich}\\
\IEEEauthorblockA{
\textit{Friedrich-Alexander-Universität Erlangen-Nürnberg (FAU)}\\
Erlangen, Germany \\
\textit{firstname}.\textit{lastname}@fau.de}
\vspace{-5mm}
}
}

\maketitle

\begin{abstract}
The introduction of dynamic power management strategies such as clock gating and power gating in dataflow networks has been shown to provide significant energy savings when applied during idle times.
However, these strategies can also degrade throughput due to shutdown and wake-up delays.
Such throughput degradations might be particularly detrimental to signal processing systems that require a guaranteed throughput.
\par
As a solution, this paper first contributes a linear-program formulation for finding a periodic maximal-throughput schedule of a given so-called \emph{self-powering dataflow network} where actors, realized in hardware, are allowed to go to sleep whenever not being enabled to fire.
Depending on which actors are allowed to power down, tradeoffs between throughput and energy savings can be obtained.
As a second contribution, we propose a mixed-integer-linear-program formulation to determine a periodic schedule that satisfies a given throughput while minimizing the overall energy per period by identifying a respective set of actors that is allowed to power down in phases of idleness and which rather not.
Finally, as a third contribution, we propose a multi-objective design-space exploration strategy called ``Hop and Skip'' to efficiently explore the Pareto front of energy and throughput solutions.
\par
Experimental evaluations on a set of existing benchmarks and randomly generated graphs witness significant exploration time reductions over a brute-force sweep.
Finally, a real-world case study is elaborated and we report on achievable energy savings and throughputs of the related dataflow network where (a) all actors are always-active, (b) all actors are self-powered, and (c) all optimal energy and throughput tradeoff points as found by the proposed design-space exploration strategy.
\end{abstract}

\begin{IEEEkeywords}
marked graph, ILP, throughput, self-powering dataflow, energy-efficient, DSE
\end{IEEEkeywords}

\section{Introduction}\label{sec:introduction}

Modern signal processing applications, such as image and video analysis, face the challenge of being both computationally demanding and energy-constrained, arising from their realization in \ac{IoT} devices that rely on ambient energy sources or batteries~\cite{Wei18EdgeComputingSurvey,Jolly21IOTBatteryLife}.
To model these applications, we can use \acp{DFG}, which consist of a network of interconnected \textit{actors}—the computational kernel nodes—and \textit{channels}—the communication edges~\cite{FalkHZT13,FalkZHT13}.
An actor is enabled to execute once specific conditions, referred to as \textit{firing rules}, are met.
Upon firing, \emph{tokens} are consumed from the input channels, and resultant tokens are produced on the output channels of an actor.
\acp{DFG} enable the exploitation of concurrency of multiple actors operating and firing concurrently (global parallelism) as well as within the implementation of actors (local parallelism).
To exploit such parallelism, there exist code-generation frameworks such as~\cite{ReicheSHMT14} that transform applications concisely described in a domain-specific language into hardware implementations that realize each application as a \ac{DFG}.
\par
Recently, there has been a growing focus in research on developing implementations of dataflow applications that are not only throughput-optimized but also energy-efficient~\cite{Lerner24MordernHWDataProcessing}.
The concept of \textit{Self-powering Dataflow Networks} \cite{karim2024-selfpoweringDFGs,karim2025-selfpoweringDFGs-mbmv} exploits the idea to allow actor implementations to power down during idle times, for example, when waiting for input data.
However, bringing an actor into sleep mode and waking it up again may introduce not only the problem of conditional execution times, but even non-negligible additional execution-time delays, raising the two issues of (a) how to properly analyze systems that exhibit dynamically arising execution-time delays, and (b) deciding if an actor should shut down at all to \revised{meet} a given throughput constraint.
\par
The paper is structured as follows: \Cref{sec:fundamentals} introduces the required fundamentals of \acp{DFG}, self-powered dataflow, and throughput analysis.
The 1st issue is addressed in \cref{sec:critactors} by proposition of a \ac{LP} formulation for analyzing the throughput of self-powering dataflow networks, while the 2nd issue is solved by introducing a \ac{MILP} formulation to determine a periodic schedule that satisfies a given throughput constraint while minimizing the overall energy per period by identifying a respective set of actors that is allowed to power down in phases of idleness and which rather not.
\Cref{sec:dse} presents our 3rd contribution, a novel \ac{DSE} strategy called \textit{\gls{hnf}}, to determine tradeoffs between throughput and energy across different \ac{DFG} implementations.
\Cref{sec:results} introduces a real-world case study to analyze achievable energy savings and throughputs of \acp{DFG} where (a) all actors are always-active, (b) all actors are self-powered, and (c) all optimal energy and throughput tradeoff points, as found by the proposed \ac{DSE} strategy.
Additionally, evaluations on a set of \ac{DFG} benchmarks and $100$ random \acp{DFG} demonstrate that the \gls{hnf} strategy significantly reduces exploration time.
Finally, related work is discussed in~\cref{sec:relwork}, and \cref{sec:conclusion} concludes the paper.

\section{Fundamentals and Background}\label{sec:fundamentals}

In this section, the fundamentals of \acfp{DFG} and the necessary notations for scheduling, power modeling, and analysis of hardware implementations of \acp{DFG} are discussed.
Also, the concept of self-powering dataflow networks~\cite{karim2024-selfpoweringDFGs} is introduced.

\subsection{Dataflow Graphs, Scheduling, and Power Modeling}

In \acp{DFG}, \emph{actors} model functionality, whereas \emph{channels} between actors model data communication and storage.
Based on actors and channels, a graph is formed that is defined as follows:

\begin{definition}[Dataflow Graph~\cite{E.Lee87SDF}]\label{def:dfg}
  A \ac{DFG} is a directed graph $\graph = (\SetActors, \SetChannels)$ containing a set of actors $\SetActors$ and a set of directed edges $\SetChannels \subseteq \SetActors \times \SetActors$ representing \ac{FIFO} channels, each associated with a number of initial tokens given by a function $\InitialTokens: \SetChannels \rightarrow \NonNegativeIntegers$.
\end{definition}

In this paper, we consider \acp{DFG} in which an actor is enabled to fire if there exists at least one token on each of its incoming channels.
When firing, it consumes exactly one token on each input and also produces exactly one token at each outgoing channel.
Such graphs are called \emph{marked graphs}~\cite{CommonerHolt71MarkedGraphs}.

\begin{example}
  \Cref{fig:firing} illustrates the scenario of an actor $\actor_1 \in \SetActors$ being enabled for firing in (b), and also the corresponding scenario after actor $\actor_1$ has fired in (c).
  Note that an actor $\actor_i$, once enabled for firing, does not necessarily need to fire immediately.
\end{example}
  
\begin{figure}[htb]
  \vspace{-5mm}
  \centering\scalebox{0.87}{\subfloat[Not fireable \label{fig:firing-nonfireable}]{
    \begin{tikzpicture}[->, node distance=1.5cm,>=stealth', scale=1, every node/.style={scale=1}]
        \node[rectangle, draw, rounded corners, minimum width=1cm, minimum height=0.75cm] (A1) {$\actor_1$};  
        
        \node[left of=A1, yshift=0.5cm] (I1) {};
        \node[left of=A1, xshift=0cm] (I2) {};
        \node[left of=A1, yshift=-0.5cm] (I3) {};
    
        \node[right of=A1, yshift=-0.35cm] (O1) {};
        \node[right of=A1, yshift=0.35cm] (O2) {};
    
        \draw[thin] (I1) -- node {} (A1);
        \draw[thin] (I2) -- node[pos=0.5, fill=black, circle, inner sep=0.5pt, minimum size=5pt] {} (A1);
        \draw[->, thin] (I3) to
        node[pos=0.3, fill=black, circle, inner sep=0.5pt, minimum size=5pt] {}
        node[pos=0.6, fill=black, circle, inner sep=0.5pt, minimum size=5pt] {} (A1);
    
        \draw[thin] (A1) to (O1);
        \draw[thin] (A1) to (O2);
    \end{tikzpicture}
}
\subfloat[Fireable \label{fig:firing-fireable}]{
    \begin{tikzpicture}[->, node distance=1.5cm,>=stealth', scale=1, every node/.style={scale=1}]
        \node[rectangle, draw, rounded corners, minimum width=1cm, minimum height=0.75cm] (A1) {$\actor_1$};  
        
        \node[left of=A1, yshift=0.5cm] (I1) {};
        \node[left of=A1, xshift=0cm] (I2) {};
        \node[left of=A1, yshift=-0.5cm] (I3) {};
    
        \node[right of=A1, yshift=-0.35cm] (O1) {};
        \node[right of=A1, yshift=0.35cm] (O2) {};
    
        \draw[thin] (I1) -- node[pos=0.5, fill=black, circle, inner sep=0.5pt, minimum size=5pt] {} (A1);
        \draw[thin] (I2) -- node[pos=0.5, fill=black, circle, inner sep=0.5pt, minimum size=5pt] {} (A1);
        \draw[->, thin] (I3) to
        node[pos=0.3, fill=black, circle, inner sep=0.5pt, minimum size=5pt] {}
        node[pos=0.6, fill=black, circle, inner sep=0.5pt, minimum size=5pt] {} (A1);    
        \draw[thin] (A1) to (O1);
        \draw[thin] (A1) to (O2);
    \end{tikzpicture}
}
\subfloat[After firing]{
    \begin{tikzpicture}[->, node distance=1.5cm,>=stealth', scale=1, every node/.style={scale=1}]
        \node[rectangle, draw, rounded corners, minimum width=1cm, minimum height=0.75cm] (A1) {$\actor_1$};    
        \node[left of=A1, yshift=0.5cm] (I1) {};
        \node[left of=A1, xshift=0cm] (I2) {};
        \node[left of=A1, yshift=-0.5cm] (I3) {};
    
        \node[right of=A1, yshift=-0.35cm] (O1) {};
        \node[right of=A1, yshift=0.35cm] (O2) {};
    
        \draw[thin] (I1) -- node {} (A1);
        \draw[thin] (I2) -- node {} (A1);
        \draw[thin] (I3) -- node[pos=0.5, fill=black, circle, inner sep=0.5pt, minimum size=5pt] {} (A1);
    
        \draw[thin] (A1) -- node[pos=0.5, fill=black, circle, inner sep=0.5pt, minimum size=5pt] {}(O1);
        \draw[thin] (A1) -- node[pos=0.5, fill=black, circle, inner sep=0.5pt, minimum size=5pt] {} (O2);
    \end{tikzpicture}
}}
  \caption{\label{fig:firing}
    The shown actor $\actor_1$ has three input channels (incoming edges) and two output channels (outgoing edges), while the black dots on these edges represent tokens.
    In (a), $\actor_1$ is \emph{not fireable} due to the absence of a token on its upper input channel.
    In (b), $\actor_1$ is \emph{fireable} because at least one token is available on each of its input channels.
    Finally, (c) depicts the situation after $\actor_1$ has fired by consuming one token from each input channel and producing one token on each output channel.}
\end{figure}
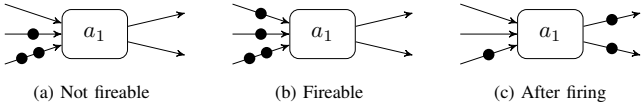

In the following, we are interested in the existence of \emph{periodic schedules} and the question of the maximal achievable throughput for a given marked graph $\graph$ by a periodic schedule $\SchedStartTimes$, defined as follows:
\par
\begin{definition}[Periodic Schedule]\label{def:periodic-schedule}
A periodic schedule $\SchedStartTimes$ of period $P$ for a \ac{DFG} $\graph = (\SetActors,\SetChannels)$ is a vector $\SchedStartTimes = (\SchedStartTime{1}, \SchedStartTime{2}, \ldots \SchedStartTime{\SetCard{\SetActors}})$ that assigns each actor $\actor_i \in \SetActors$ a start time $\SchedStartTime{i}$
such that
\begin{equation}
  \SchedStartTime{j} + \InitialTokens(\actor_i, \actor_j) \cdot P \ge \SchedStartTime{i} + \ExecutionTime{i} \quad \forall (\actor_i, \actor_j) \in \SetChannels,\label{eq:data-dependencies}
\end{equation}
where $\ExecutionTime{i} \in \NonNegativeIntegers$ corresponds to the \emph{execution time} of actor $\actor_i$ that includes the time to read input data, perform an actor-specified function of the input data, and finally consume the input and produce output data, while $\InitialTokens(\actor_i, \actor_j)$ denotes the number of initial tokens on the channel $(\actor_i, \actor_j)$ between actor $\actor_i$ and actor $\actor_j$.
In such a periodic schedule, the $n$-th firing of the actor $\actor_i$ starts at time $\SchedStartTime{i} + (n-1)\cdot P$.
Thus, $\SchedStartTime{i}$ denotes the start time of the first firing $(n=1)$ of each actor $\actor_{i} \in \SetActors$.
\end{definition}
\par
Regarding the question of the \emph{maximal achievable throughput}, i.e., corresponding to the \emph{minimal period} $P_\mathrm{min}$, it can be shown that for marked graphs, a \emph{period lower bound} $P_\mathrm{min}$ can be determined by the so-called \emph{maximal cycle mean}~\cite{Fet76,Parhi91OptUnfold} as follows:
\par
\begin{definition}[Maximum Cycle Mean~\cite{Parhi91OptUnfold}]
  For a \ac{DFG} $\graph = (\SetActors,\SetChannels)$ with at least one \emph{directed cycle}, the period lower bound $P_\mathrm{min}$ is given by
  \vspace{-4mm}
  \begin{equation} \label{eqn:max-cycle-mean}
    P_\mathrm{min} \;=\; \max_{z \in Z} \frac
        {\sum\limits_{\actor_i \in z \cap \SetActors}     \ExecutionTime{i}}
        {\sum\limits_{\channel_k \in z \cap \SetChannels} \InitialTokens(\channel_k)},
  \end{equation}
  where $Z$ denotes the set of directed cycles of graph $\graph$.%
  \footnote{If we allocate just one resource instance for each actor $\actor_i$ (as in the case of a direct hardware implementation of a \ac{DFG}), then $P_{\mathrm min}$ is additionally bounded by all self-cycles $\channel = (\actor_i, \actor_i)$ with weight $\ExecutionTime{i}$ and $\InitialTokens(\channel) = 1$.}
\end{definition}
\par
A \textit{self-timed} schedule in which each actor fires immediately when becoming enabled to fire can be shown to always realize this period lower bound~\cite{moreira2007self}.
\par
\begin{example}
As our running example, a \ac{DFG} of an \ac{AEC} network architecture is illustrated in~\cref{fig:echo-cancellation}.
Shorthand names are provided for each actor and are used interchangeably.
The actor \texttt{src} provides an input signal, which we call the playback signal, to the network, which is played back on a speaker modelled by the actor \texttt{spk}.
The actor \texttt{mic} captures a second microphone input signal, which we call the recorded signal, which includes the echo of the playback signal.
The echo needs to be mitigated before being sent to the actor \texttt{snk}.
The recorded signal is initially passed into an anti-aliasing/low-pass filter implemented in actor \texttt{filt}.
The actor \texttt{sum} subtracts the predicted echo signal (computed by the actor \texttt{conv}) from the filtered signal.
The mitigated signal from \texttt{sum} is then passed onto actor \texttt{dup2}, which duplicates and forwards it to actor \texttt{adap}, as well as actor \texttt{sink} for further use.
The actor \texttt{dup3} replicates the playback signal and sends it to the actors \texttt{adap}, \texttt{conv}, and \texttt{spk}.
Convolution filter coefficients are adapted by the actor \texttt{adap} based on the mitigated signal as well as the playback signal, and forwarded to  actor \texttt{conv}, which generates the predicted echo signal.

\begin{figure}[ht]
  \vspace{-1mm}
  \centering \scalebox{0.95}{\begin{tikzpicture}[->, node distance=2.5cm, >=stealth', scale=1, every node/.style={scale=1}]

    \node[draw, rectangle, rounded corners, minimum width=1.5cm, minimum height=0.5cm] (mic) {$\actor_{mic} \mid \texttt{mic}$};
    \node[draw,fill=lightgray, rectangle, rounded corners, minimum width=1.5cm, minimum height=0.5cm, below of=mic, yshift=1cm] (filt) {$\actor_1 \mid \texttt{filt}$};
    \node[draw,fill=lightgray, rectangle, rounded corners, minimum width=1.5cm, minimum height=0.5cm, right of=filt] (sum) {$\actor_2 \mid \texttt{sum}$};
    \node[draw,fill=lightgray, rectangle, rounded corners, minimum width=1.5cm, minimum height=0.5cm, right of=sum] (dup2) {$\actor_3 \mid \texttt{dup2}$};
    \node[draw,fill=lightgray, rectangle, rounded corners, minimum width=1.5cm, minimum height=0.5cm, above of=dup2,yshift=-1cm] (adap) {$\actor_4 \mid \texttt{adap}$};
   \node[draw,fill=lightgray, rectangle, rounded corners, minimum width=1.5cm, minimum height=0.5cm, above of=sum,yshift=-1cm] (conv) {$\actor_5 \mid \texttt{conv}$};
   \node[draw, fill=lightgray, rectangle, rounded corners, minimum width=1.5cm, minimum height=0.5cm, above of=conv,yshift=-1cm,xshift=1.25cm] (dup_3) {$\actor_6 \mid \texttt{dup3}$};
   \node[draw, rectangle, rounded corners, minimum width=1.5cm, minimum height=0.5cm, right of=dup2] (sink) {$\actor_{snk} \mid \texttt{sink}$};\textbf{}
   \node[draw, rectangle, rounded corners, minimum width=1.5cm, minimum height=0.5cm, above of=mic, yshift=-1cm] (spk) {$\actor_{spk} \mid \texttt{spk}$};
   \node[draw, rectangle, rounded corners, minimum width=1.5cm, minimum height=0.5cm, right of=adap, yshift=1.5cm] (src) {$\actor_{src} \mid \texttt{src}$};

    \draw (conv) -- node[pos=0.4, fill=black, circle, inner sep=0.5pt, minimum size=5pt] {} (sum);
    \draw (mic) -- (filt);
    \draw (filt) -- (sum);
    \draw (sum) -- (dup2);
    \draw (dup2) -- (sink);
    \draw (dup2) -- (adap);
    \draw (adap) -- (conv);
    \draw (dup_3) -- (conv);
    \draw (dup_3) -- (spk);
    \draw (dup_3) -- (adap);
    \draw (src) -- (dup_3);

\end{tikzpicture}}
  \vspace{-4mm}
  \caption{\label{fig:echo-cancellation} \ac{DFG} of an \ac{AEC} application, detailing the flow of data through actors (nodes) interconnected by \ac{FIFO} channels (directed edges).
  A single initial token is present on the channel $(\actor_5, \actor_2)$, i.e., $\InitialTokens(\actor_5, \actor_2) = 1$.}
  \vspace{-2mm}
\end{figure}
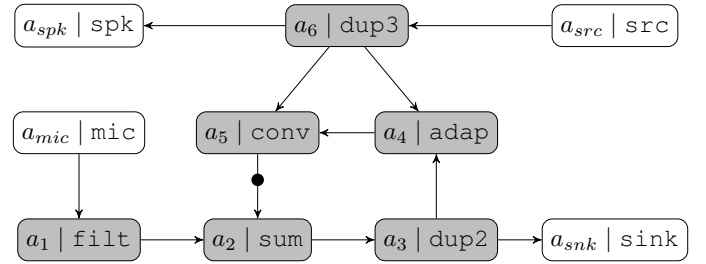

The white actors of the \ac{AEC} application are input and output actors, and not considered part of a \ac{DFG} implementation, while the shaded actors of the \ac{AEC} application need to be implemented, and hence, form the set of actors subject to analysis, i.e., $\SetActors = \{ \actor_1, \actor_2, \ldots\actor_6 \}$.
For the actor execution times, we assume $\ExecutionTime{1}=\ExecutionTime{6}=4, \ExecutionTime{2}=\ExecutionTime{3}=3, \ExecutionTime{4}=9$, and $\ExecutionTime{5}=8$ time units.
\par
\cref{eqn:max-cycle-mean} delivers $P_\mathrm{min} = 23$ as the \ac{DFG} of the \ac{AEC} application contains exactly one directed cycle $z_1 = \{ \actor_2,$ $(\actor_2, \actor_3),$ $\actor_3,$ $(\actor_3, \actor_4),$ $\actor_4,$ $(\actor_4, \actor_5),$ $\actor_5,$ $(\actor_5, \actor_2) \}$, $\ExecutionTime{2} + \ExecutionTime{3} + \ExecutionTime{4} + \ExecutionTime{5} = 23$, and $\InitialTokens(\actor_2, \actor_3) + \InitialTokens(\actor_3, \actor_4) + \InitialTokens(\actor_4, \actor_5) +\InitialTokens(\actor_5, \actor_2) = 1$.
Moreover, as shown in~\cref{fig:anc-schedule}, there exists a periodic schedule $\SchedStartTimes = (\SchedStartTime{1}, \SchedStartTime{2}, \ldots\SchedStartTime{6}) = (0,4,7,10,19,6)$ realizing the period $P_\mathrm{min}$.

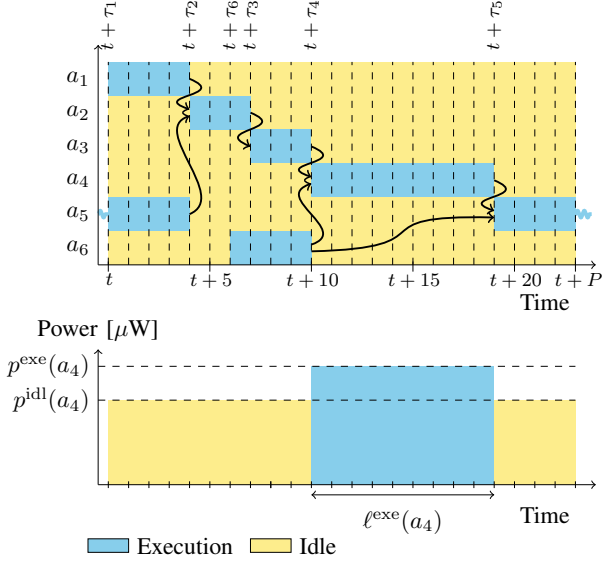
\begin{figure}[ht]
  \vspace{-3mm}
  \centering
  \scalebox{0.9}{\begin{tikzpicture}[xscale=0.3, yscale=0.5]


    

    

    \node[left] at (0.5,0.5) {$\actor_6$};
    \node[left] at (0.5,1.5) {$\actor_5$};
    \node[left] at (0.5,2.5) {$\actor_4$};
    \node[left] at (0.5,3.5) {$\actor_3$};
    \node[left] at (0.5,4.5) {$\actor_2$};
    \node[left] at (0.5,5.5) {$\actor_1$};

    

    \fill[pastel-yellow] (0 +1,0) rectangle (23 +1,6); 
    \fill[pastel-blue]  ( 0 +1,5) rectangle (4 +1,6);
    \fill[pastel-blue]  ( 4 +1,4) rectangle (7 +1,5);
    \fill[pastel-blue]  ( 7 +1,3) rectangle (10 +1,4);
    \fill[pastel-blue]  (10 +1,2) rectangle (19 +1,3);
    \fill[pastel-blue]  (19 +1,1) rectangle (23 +1,2);
    \fill[pastel-blue]  ( 0 +1,1) rectangle (4 +1,2);
    \fill[pastel-blue]  ( 6 +1,0) rectangle (10 +1,1); 
    

    \draw[dashed] (1,0) -- (1,6);
    \draw[dashed] (2,0) -- (2,6);
    \draw[dashed] (3,0) -- (3,6);
    \draw[dashed] (4,0) -- (4,6);
    \draw[dashed] (5,0) -- (5,6);
    \draw[dashed] (6,0) -- (6,6);
    \draw[dashed] (7,0) -- (7,6);
    \draw[dashed] (8,0) -- (8,6);
    \draw[dashed] (9,0) -- (9,6);
    \draw[dashed] (10,0) -- (10,6);
    \draw[dashed] (11,0) -- (11,6);
    \draw[dashed] (12,0) -- (12,6);
    \draw[dashed] (13,0) -- (13,6);
    \draw[dashed] (14,0) -- (14,6);
    \draw[dashed] (15,0) -- (15,6);
    \draw[dashed] (16,0) -- (16,6);
    \draw[dashed] (17,0) -- (17,6);
    \draw[dashed] (18,0) -- (18,6);
    \draw[dashed] (19,0) -- (19,6);
    \draw[dashed] (20,0) -- (20,6);
    \draw[dashed] (21,0) -- (21,6);
    \draw[dashed] (22,0) -- (22,6);
    \draw[dashed] (23,0) -- (23,6);
    \draw[dashed] (24,0) -- (24,6);


    \node[below, rotate=90] at  (1-0.75,7) {\footnotesize{ $t+\SchedStartTime{1}$}};
    \node[below, rotate=90] at  (5-0.75,7) {\footnotesize{ $t+\SchedStartTime{2}$}};
    \node[below, rotate=90] at  (8-0.75,7) {\footnotesize{ $t+\SchedStartTime{3}$}};
    \node[below, rotate=90] at  (7-0.75,7) {\footnotesize{ $t+\SchedStartTime{6}$}};
    \node[below, rotate=90] at (11-0.75,7) {\footnotesize{ $t+\SchedStartTime{4}$}};
    \node[below, rotate=90] at (20-0.75,7) {\footnotesize{ $t+\SchedStartTime{5}$}};

    \draw[-]  (1,-0.2) -- (1,0.1);
    \node[below] at (1-0.1, 0) {\footnotesize{ $t$}};

    \draw[-]  (6,-0.2) -- (6,0.1);
    \node[below] at (6-0.1, 0) {\footnotesize{ $t+5$}};

    \draw[-]  (11,-0.2) -- (11,0.1);
    \node[below] at (11-0.1, 0) {\footnotesize{ $t+10$}};

    \draw[-]  (16,-0.2) -- (16,0.1);
    \node[below] at (16-0.1, 0) {\footnotesize{ $t+15$}};

    \draw[-]  (21,-0.2) -- (21,0.1);
    \node[below] at (21-0.1, 0) {\footnotesize{ $t+20$}};

    \draw[-]  (24,-0.2) -- (24,0.1);
    \node[below] at (24,0) {\footnotesize{ $t+P$}};

    \draw[->, thick] (5,5.5) .. controls (7,5) and (3,5) .. (5,4.6);
    \draw[->, thick] (5,1.5) .. controls (7,2) and (3,4) .. (5,4.4);
    \draw[->, thick] (8,4.5) .. controls (10,4) and (6,4) .. (8,3.5);
    \draw[->, thick] (11,3.5) .. controls (13,3) and (9,3) .. (11,2.6);
    \draw[->, thick] (11,0.6) .. controls (13,1) and (9,2) .. (11,2.4);
    \draw[->, thick] (20,2.5) .. controls (22,2) and (18,2) .. (20,1.6);
    \draw[->, thick] (11,0.4) .. controls (18,0.5) and (13,1.5)  .. (20,1.4);
    
    \tikzset{snake it/.style={decorate, decoration={snake, amplitude=0.5mm, segment length=1mm}}}
    
    \usetikzlibrary{decorations.pathmorphing}
    \draw[snake it, ultra thick, color=pastel-blue] (24,1.5) -- (24.8,1.5);
    \draw[snake it, ultra thick, color=pastel-blue] (0.5,1.5) -- (1,1.5);

    \draw[->] (0.5,0) -- (25,0); 
    \draw[->] (0.5,0) -- (0.5,6.5);
    \node[below] at (22.5,-0.6) {Time};

\begin{scope}[shift={(0,-6.5)}, every node/.style={anchor=west}]

    \fill[pastel-yellow] (1,0) rectangle (11,2.5);
    \fill[pastel-yellow] (20,0) rectangle (24,2.5);
    \fill[pastel-blue] (11,0) rectangle (20,3.5);

    \draw[<->] (11,-0.3) -- (20,-0.3);
    \node[below] at (15.5,-0.5) {$\ExecutionTime{4}$};

    
    \draw[dashed] (0.5,2.5) -- (24,2.5);
    \draw[dashed] (0.5,3.5) -- (24,3.5);

    \node[left] at (0.5,2.5) {$\IdlePower{4}$};
    \node[left] at (0.5,3.5) {$\ExecutionPower{4}$};

    \draw[-] (9,-0.1) -- (9,0.1);
    \draw[-] (1,-0.1) -- (1,0.1);
    \draw[-] (2,-0.1) -- (2,0.1);
    \draw[-] (3,-0.1) -- (3,0.1);
    \draw[-] (4,-0.1) -- (4,0.1);
    \draw[-] (5,-0.1) -- (5,0.1);
    \draw[-] (6,-0.1) -- (6,0.1);
    \draw[-] (7,-0.1) -- (7,0.1);
    \draw[-] (8,-0.1) -- (8,0.1);
    \draw[-] (9,-0.1) -- (9,0.1);
    \draw[-] (10,-0.1) -- (10,0.1);
    \draw[-] (11,-0.1) -- (11,0.1);
    \draw[-] (12,-0.1) -- (12,0.1);
    \draw[-] (13,-0.1) -- (13,0.1);
    \draw[-] (14,-0.1) -- (14,0.1);
    \draw[-] (15,-0.1) -- (15,0.1);
    \draw[-] (16,-0.1) -- (16,0.1);
    \draw[-] (17,-0.1) -- (17,0.1);
    \draw[-] (18,-0.1) -- (18,0.1);
    \draw[-] (19,-0.1) -- (19,0.1);
    \draw[-] (20,-0.1) -- (20,0.1);
    \draw[-] (21,-0.1) -- (21,0.1);
    \draw[-] (22,-0.1) -- (22,0.1);
    \draw[-] (23,-0.1) -- (23,0.1);
    \draw[-] (24,-0.1) -- (24,0.1);

    \draw[->] (0.5,0) -- (0.5,4) node[above] {Power [$\mu\text{W}$]};
    \draw[->] (0.5,0) -- (25,0); 
    \node[below] at (22.5,-0.4) {Time}; 
\end{scope}

  \begin{scope}[shift={(0,-8.5)}, every node/.style={anchor=west}]
        \draw[fill=pastel-blue, draw=black] (0,0) rectangle (2,0.4);
        \node at (2,0.2) {Execution};
        
        \draw[fill=pastel-yellow, draw=black] (8,0) rectangle (10,0.4);
        \node at (10,0.2) {Idle};
        
        
        
    \end{scope}
    
\end{tikzpicture}}
  \vspace{-1mm}
  \caption{\label{fig:anc-schedule}%
    Schedule of the \ac{AEC} \ac{DFG} (see~\cref{fig:echo-cancellation}) with minimal period $P = P_{\mathrm{min}} = 23$ (top).
    The arrows represent data dependencies.
    Shown as well is the power profile of actor $\actor_4$ when operating in always-active mode (bottom).
    During the execution phase of the actor (blue) a higher power $\ExecutionPower{4}$ is consumed than the power $\IdlePower{4}$ in the idle phase (yellow).} 
  \vspace{-2mm}
\end{figure}
\end{example}

Now, a \ac{DFG} could be implemented one-to-one in hardware by generating a dedicated actor circuit for each actor $\actor_i \in \SetActors$ and implementing each channel as a \ac{FIFO} buffer, connecting the two communicating actors directly.
An obvious benefit of implementing a data-driven activation of each actor $\actor_i$ in hardware is that in times of idleness in a given schedule, an actor could go into a low-power state.

\begin{example}
For our running example, the lower plot of \cref{fig:anc-schedule} shows such a power profile of actor $\actor_4$.
In a periodic actor schedule with period $P$, only in the time interval from $\SchedStartTime{i}$ to $\SchedStartTime{i} + \ExecutionTime{i}$ is the high power consumption of the \emph{actor execution power} $\ExecutionPower{i}$ experienced.
In the remaining time interval of length $P - \ExecutionTime{i}$ within the period of length $P$, a lower power consumption $\IdlePower{i}$ called \emph{actor idle power} is asserted.
In the following, we refer to a mode of execution where an actor is able to execute immediately upon fireability as the \emph{\acf{AA}} mode.

With these model assumptions, we can estimate the energy consumption per iteration of an always-active actor $\actor_i$ scheduled within a given period $P$ as follows:
\begin{equation}\label{eq:E_Always_On}
  \begin{split}
    \EnergyAlwaysActiveActor{i}
      &= \ExecutionPower{i} \cdot \ExecutionTime{i} \\
      &+ \IdlePower{i} \cdot (P - \ExecutionTime{i})
  \end{split}
\end{equation}
\end{example}

\subsection{Self-Powering Dataflow Networks}
In \cite{karim2024-selfpoweringDFGs}, the concept of \emph{self-powering dataflow networks} has been introduced.
Motivated by the fact that in recent circuit technologies, even the idle power of actors can be quite substantial, the authors of \cite{karim2025-selfpoweringDFGs-mbmv} proposed to fully exploit modern power-saving techniques like \emph{clock-gating} and \emph{power-gating} to reduce the power and thereby energy consumption of circuits implementing DFGs further. 
\par
A mode where, during times of non-fireability, an actor is put into a \emph{low-power (sleep) mode} using techniques such as clock-gating or power-gating will be referred to as the \emph{\acf{SP}} mode of operation in the following.
\begin{example}
Continuing our running example, each mode can be applied to each actor individually.
For example,~\cref{fig:profile-sleep} shows actor $\actor_4$ executing in self-powered mode.
\par
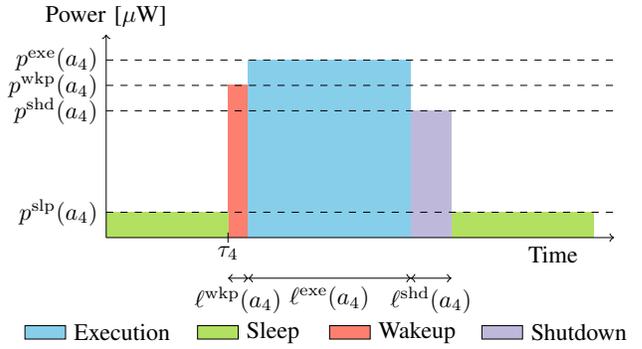
\begin{figure}[ht]
  \centering
  \scalebox{0.9}{{\begin{tikzpicture}[xscale=0.3, yscale=0.75]


    \def\xaStart{0}
    \def\xaEnd{25}
    \def\yaStart{6}
    \def\yaEnd{7.5}

    \def\xbStart{0}
    \def\xbEnd{25}
    \def\ybStart{0}
    \def\ybEnd{4}
    \fill[pastel-red] (6,0) rectangle (7,3.0);
    \fill[pastel-blue] (7,0) rectangle (15,3.5);
    \fill[pastel-lavender] (15,0) rectangle (17,2.5);
    \fill[pastel-green] (0,\ybStart) rectangle (6,\ybStart+0.5);
    \fill[pastel-green] (17,\ybStart) rectangle (24,\ybStart+0.5);

    \draw[->] (\xbStart,\ybStart) -- (\xbEnd,0);
    \node[below] at (22,0) {Time}; 
    \draw[->] (\xbStart,\ybStart) -- (\xaStart,\ybEnd) node[above] {Power [$\mu\text{W}$]};

    \draw[<->] (6,-0.8) -- (7,-0.8); 
    \node[below] at (6.5,-0.8) {$\WakeupDelay{4}$};
    \draw[<->] (7,-0.8) -- (15,-0.8); 
    \node[below] at (11,-0.8) {$\ExecutionTime{4}$};
    \draw[<->] (15,-0.8) -- (17,-0.8); 
    \node[below] at (16,-0.8) {$\ShutdownDelay{4}$};

    
    \node[below] at (6,0) {$\SchedStartTime{4}$};
    \draw[-] (6,-0.1) -- (6,0.1);

    \node[left] at (0,3.5) {$\ExecutionPower{4}$};
    \node[left] at (0,3.0) {$\WakeupPower{4}$};
    \node[left] at (0,2.5) {$\ShutdownPower{4}$};
    \node[left] at (0,0.5) {$\SleepPower{4}$};

    \draw[dashed] (0,0.5) -- (25,0.5);
    \draw[dashed] (0,3.5) -- (25,3.5);
    \draw[dashed] (0,2.5) -- (25,2.5);
    \draw[dashed] (0,3.0) -- (25,3.0);

    \begin{scope}[shift={(-4,-2)}, every node/.style={anchor=west}]
        \draw[fill=pastel-blue, draw=black] (0,0) rectangle (2,0.264);
        \node at (2,0.13) {Execution};
        
        \draw[fill=pastel-green, draw=black] (8.5,0) rectangle (10.5,0.264);
        \node at (10.5,0.13) {Sleep};
        
        \draw[fill=pastel-red, draw=black] (15,0) rectangle (17,0.264);
        \node at (17,0.13) {Wakeup};
        
        \draw[fill=pastel-lavender, draw=black] (22.5,0) rectangle (24.5,0.264);
        \node at (24.5,0.13) {Shutdown};
        
        
    \end{scope}

\end{tikzpicture}}}
  \caption{\label{fig:profile-sleep}%
    Power consumption and execution phases of actor $\actor_4$ of the running example when operating in self-powered mode.}
\end{figure}

In the above figure, the power profile of actor $\actor_4$ is shown, assuming that it shuts down into a low-power (sleep) mode immediately after detecting insufficient data availability at its inputs for committing a next firing at the time of finishing its execution (firing).
In the shown schedule, fireability is established at time $\SchedStartTime{4}$.
However, the actor now cannot execute immediately, but must first experience a wake-up delay (latency) $\WakeupDelay{4}$.
Similarly, it usually takes some non-zero time $\ShutdownDelay{4}$ to shut down an actor. 
Note that wake-up and shutdown not only induce delays but also incur power consumption overheads, as modeled by power values $\WakeupPower{4}$ and  $\ShutdownPower{4}$, respectively.
However, as a benefit, the actor will be \emph{dormant} (or in sleep mode) for a duration of $P - \WakeupDelay{4} - \ExecutionTime{4} - \ShutdownDelay{4}$, during which it consumes only $\SleepPower{4}$, which is typically much less than the idle power $\IdlePower{4}$ consumed if the actor would operate in always-active mode.
\end{example}
\par
Using these model assumptions, the energy consumption per iteration of a self-powered actor $\actor_i$ scheduled with a period $P$ can be modeled as follows:
\begin{equation}\label{eq:E_Self_Powered}
\begin{split}
&\EnergySelfPoweredActor{i}=\WakeupPower{i} \cdot \WakeupDelay{i}\\
&\quad+\ExecutionPower{i} \cdot \ExecutionTime{i}\\
&\quad+\ShutdownPower{i} \cdot \ShutdownDelay{i}\\
&\quad+\SleepPower{i} \cdot (P - \WakeupDelay{i} - \ExecutionTime{i} - \ShutdownDelay{i})
\end{split}
\end{equation}
\par
However, if an actor operates in self-powered mode, \revised{a} throughput penalty may result due to the induced wake-up delay.
Our techniques presented in the following shall analyze throughput and energy-saving tradeoffs by choosing whether to implement an actor to execute in always-active or in self-powered mode, calling such implementations \emph{hybrid}.
But first, we want to reason about the existence of periodic schedules for such hybrid implementations at all, as execution delays obviously depend on whether an actor is executed in always-active or in self-powered mode.

\section{Maximal Throughput and Minimal Energy Schedules for Hybrid \acs{DFG} Implementations}\label{sec:critactors}

First, we will prove the existence of periodic schedules for hybrid \ac{DFG} implementations by proposing a \acf{LP} formulation in~\cref{sec:lp}.
Then, assuming we want to sustain a given period $P$, \cref{sec:milp} shows how to determine a periodic schedule with period $P$ with minimal energy consumption by solving a \acf{MILP} that identifies a set of so-called \emph{critical actors}.
Each actor identified as critical must execute in always-active mode, as otherwise, the induced wake-up or sleep delays would no longer allow for a schedule with the given period $P$.
In contrast, all other actors shall be self-powered.

\subsection{Maximal Throughput Periodic Schedules}\label{sec:lp}

To determine the maximal achievable throughput for a hybrid \ac{DFG} implementation, we propose an \ac{LP} formulation as given in~\cref{def:LP}. Solving the \ac{LP} returns a schedule for a DFG $\graph$ (cf.~\cref{lp:graph}) for which it is assumed to be known for each actor $\actor_i \in \SetActors$ whether it is implemented in always-active or self-powered mode.
This is modeled by a binary decision vector $\DecisionVector=(\DecisionVariable{1}, \DecisionVariable{2}, \ldots\DecisionVariable{\SetCard{\SetActors}}) \in \{0,1\}^\SetCard{\SetActors}$ (cf.~\cref{lp:x}).
For each actor $\actor_i \in \SetActors$, the value $\DecisionVariable{i}$ of the decision vector $\DecisionVector$ indicates whether the actor is configured to be always-active ($\DecisionVariable{i} = 0$) or self-powered ($\DecisionVariable{i} = 1$).
As a result of the \ac{LP}, we obtain a periodic schedule $\SchedStartTimes$ (cf.~\cref{lp:schedule-out}) and its period $P \in \RationalsNonNegative$ (cf.~\cref{lp:period-out,lp:period-var}) that shall be minimized (cf.~\cref{lp:objective}).
The schedule $\SchedStartTimes$ contains for each actor $\actor_i \in \SetActors$ its start time $\SchedStartTime{i}$ (cf.~\cref{lp:start-var}), which denotes the time the actor $\actor_i$ fires for the first iteration, and subsequent firings of this actor repeat with a distance of period $P$.
\par
In order to ensure that an actor can begin the next iteration of execution only once its previous execution is completed, a constraint (cf.~\cref{lp:self-loop}) is added for each actor $\actor_i \in \SetActors$ that the period $P$ must be greater than or equal to the execution time ($\ExecutionTime{i}$) of the actor, including the potential overheads of wake-up ($\WakeupDelay{i}$) and shutdown ($\ShutdownDelay{i}$) delays if the actor is self-powered, i.e., $\DecisionVariable{i} = 1$.
Finally, all \ac{DFG} data dependencies have to be respected according to~\cref{eq:data-dependencies}.
Note that if an actor $\actor_i$ is self-powered ($\DecisionVariable{i} = 1$), then not only the actor execution time $\ExecutionTime{i}$ but also the wake-up delay $\WakeupDelay{i}$ must be accounted for, which results in the constraints given in~\cref{lp:channel}.

\pagebreak

\par\noindent\rule{\columnwidth}{0.4pt}\vspace{-3mm}
\begin{listing}{$\texttt{LP}(\graph, \DecisionVector)$ to minimize period $P$\label{def:LP}}
\\\vspace{-2.5mm}\hrule\vspace{1mm}\noindent
  \begin{tabular}{l@{}r@{\;}ll}
    \multicolumn{3}{l}{\textbf{Inputs:}}\\
      \rowlabel{lp:graph} & \quad \ac{DFG} $\graph$             &$=(\SetActors, \SetChannels)$ \\
      \rowlabel{lp:x} & \quad Decision vector $\DecisionVector$ &$=(\DecisionVariable{1}, \DecisionVariable{2}\ldots \DecisionVariable{\SetCard{\SetActors}})$ \\
    \multicolumn{3}{l}{\textbf{Output:}}\\
      \rowlabel{lp:schedule-out} & \quad Schedule $\SchedStartTimes$ & $= (\SchedStartTime{1}, \SchedStartTime{2}, \ldots \SchedStartTime{\SetCard{\SetActors}})$  \\
      \rowlabel{lp:period-out} & \quad Period $P$\\
    \multicolumn{3}{l}{\textbf{LP Objective:}} \\
      \rowlabel{lp:objective}  & \quad $\min$ $P$ \\      
    \multicolumn{3}{l}{\textbf{LP Variables:}}\\
      \rowlabel{lp:period-var} & \quad Period     $P$                 &$\in\RationalsNonNegative$ \\
      \rowlabel{lp:start-var} & \quad Start time $\SchedStartTime{i}$ &$\in \RationalsNonNegative$ & $\forall \actor_i \in \SetActors$ \\
    \multicolumn{3}{l}{\textbf{LP Constraints:}}\\
      \rowlabel{lp:self-loop} & \quad $P$ 
                  &$\ge \ExecutionTime{i}$ \\
      &           &$+\DecisionVariable{i}\cdot\WakeupDelay{i}$\\
      &           &$+\DecisionVariable{i}\cdot\ShutdownDelay{i}$ & $\forall \actor_i \in \SetActors$ \\
      \rowlabel{lp:channel} & \quad $\SchedStartTime{j} + \InitialTokens(\actor_i, \actor_j) \cdot P$ 
                  &$\ge \SchedStartTime{i} + \ExecutionTime{i}$ \\
      &           &$+ \DecisionVariable{i}\cdot\WakeupDelay{i}$ & $\forall (\actor_i, \actor_j) \in \SetChannels$
  \end{tabular}
\vspace{1mm}
\par\noindent\rule{\columnwidth}{0.4pt}\vspace{2mm}
\end{listing}
\par

Solving the \ac{LP} with $\DecisionVector = \vec{0}$, i.e., $\DecisionVariable{i}=0 \; \forall \; \actor_i \in \SetActors$, will determine the minimal period $P_\mathrm{min} = \texttt{LP}(\graph, \vec{0})$ obtained when all actors are in always-active mode.
This period may serve as a lower bound for subsequent \ac{DSE} strategies.
Conversely, solving the \ac{LP} with $\DecisionVector = \vec{1}$, i.e., $\DecisionVariable{i}=1 \; \forall \; \actor_i \in \SetActors$, will determine a periodic schedule of a fully self-powered \ac{DFG} implementation with period $P_\mathrm{max} = \texttt{LP}(\graph, \vec{1})$ that may be used as an upper bound for subsequent \ac{DSE} strategies.
Using~\cref{eq:E_Always_On,eq:E_Self_Powered}, the overall energy consumption of a hybrid implementation  with period $P$ can be determined as given by the following~\cref{eq:totalenergy}.
\begin{equation}
    E(\SetActors,P,\DecisionVector) = \sum_{\actor_i \in \SetActors}{(1-\DecisionVariable{i})\cdot \EnergyAlwaysActiveActor{i} + \DecisionVariable{i}\cdot\EnergySelfPoweredActor{i}}. \label{eq:totalenergy}
\end{equation}

\vspace{-6mm}
\subsection{Minimal Energy Periodic Schedules}\label{sec:milp}

Next, we reformulate the \ac{LP} in~\cref{def:LP} into an \ac{MILP} formulation (see~\cref{def:MILP}) to determine for each actor of a given \ac{DFG} $\graph$ (cf.~\cref{milp:graph}) whether it should run in always-active or in self-powered mode, thereby outputting for a given period $P$ (cf.~\cref{milp:period}) an energy-minimal periodic schedule $\SchedStartTimes$ (cf.~\cref{milp:schedule-out}).
Obviously, while guarding the period constraint $P$, the mode of each actor $\actor_i$ needs to be decided, i.e., to run it in always-active ($\DecisionVariable{i} = 0$) or in self-powered ($\DecisionVariable{i} = 1$) mode (cf~\cref{milp:decision-var}), as encoded by the decision vector $\DecisionVector$, being an output of the \ac{MILP} (cf.~\cref{milp:x-out}).
The objective function (cf.~\cref{milp:objective}) is to minimize the total energy consumption per iteration across all actors, as defined by~\cref{eq:totalenergy}, where for each actor $\actor_i \in \SetActors$, the total energy consumption per iteration is calculated and summed up.
In case a given actor $\actor_i$ is classified as critical ($\DecisionVariable{i} = 0$) in the solution of the \ac{MILP}, the actor must be operated in always-active mode.
Therefore, only execution power $\ExecutionPower{i}$ and idle power $\IdlePower{i}$ are considered in the cost function as energy contribution ($\EnergyAlwaysActiveActor{i}$ acc. to~\cref{eq:E_Always_On}).
Conversely, if the actor is classified as non-critical, the power terms $\EnergySelfPoweredActor{i}$ acc. to~\cref{eq:E_Self_Powered} are added up in the objective function, including shutdown power $\ShutdownPower{i}$, wake-up power $\WakeupPower{i}$, and dormant (sleep) power $\SleepPower{i}$.
Finally, \cref{milp:start-var,milp:self-loop,milp:channel} remain as in the \ac{LP} in~\cref{def:LP}.
\par
\par\noindent\rule{\columnwidth}{0.4pt}\vspace{-3mm}
\begin{listing}{$\texttt{MILP}(\graph, P)$ to minimize energy $E(\SetActors,P,\DecisionVector)$\label{def:MILP}}
\\\vspace{-2.5mm}\hrule\vspace{1mm}\noindent
  \begin{tabular}{l@{}r@{\;}ll}
    \multicolumn{3}{l}{\textbf{Inputs:}}\\
      \rowlabel{milp:graph}  & \quad \ac{DFG} $\graph$                 &$=(\SetActors, \SetChannels)$ \\
      \rowlabel{milp:period} & \quad Period $P$ & $\in \RationalsNonNegative$ \\
    \multicolumn{3}{l}{\textbf{Output:}}\\
      \rowlabel{milp:schedule-out} & \quad Schedule $\SchedStartTimes$ & $= (\SchedStartTime{1}, \SchedStartTime{2}, \ldots \SchedStartTime{\SetCard{\SetActors}})$  \\
      \rowlabel{milp:x-out} & \quad  Decision vector $\DecisionVector$ &$=(\DecisionVariable{1}, \DecisionVariable{2}\ldots \DecisionVariable{\SetCard{\SetActors}})$ \\
    \multicolumn{3}{l}{\textbf{MILP Objective:}} \\
      \rowlabel{milp:objective}  & \quad $\min E$&$\!\!(\SetActors,P,\DecisionVector)$&cf.~\cref{eq:totalenergy}\\
    \multicolumn{3}{l}{\textbf{MILP Variables:}}\\
      \rowlabel{milp:decision-var} & \quad Decision variable $\DecisionVariable{i}$ &$\in \{0, 1\}$ & $\forall \actor_i \in \SetActors$ \\
      \rowlabel{milp:start-var} & \quad Start time $\SchedStartTime{i}$ &$\in \RationalsNonNegative$ & $\forall \actor_i \in \SetActors$ \\
    \multicolumn{3}{l}{\textbf{MILP Constraints:}}\\
      \rowlabel{milp:self-loop} & \quad $P$ 
                  &$\ge \ExecutionTime{i}$ \\
      &           &$+\DecisionVariable{i}\cdot\WakeupDelay{i}$\\
      &           &$+\DecisionVariable{i}\cdot\ShutdownDelay{i}$ & $\forall \actor_i \in \SetActors$ \\
      \rowlabel{milp:channel} & \quad $\SchedStartTime{j} + \InitialTokens(\actor_i, \actor_j) \cdot P$ 
                  &$\ge \SchedStartTime{i} + \ExecutionTime{i}$ \\
      &           &$+ \DecisionVariable{i}\cdot\WakeupDelay{i}$ & $\forall (\actor_i, \actor_j) \in \SetChannels$
  \end{tabular}
\vspace{-3.5mm}
\par\noindent\rule{\columnwidth}{0.4pt}\vspace{2mm}
\end{listing}

\begin{example}
Continuing with our running example, it turns out that for the input $P = P_\textrm{min}$, some actors cannot be put into self-powered mode in any schedule that satisfies this period constraint as they belong to cycles determining the maximal cycle mean (critical cycles), as an increased execution time would lead to a violation of the given period constraint.
Therefore, solving the \ac{MILP} for an input of period $P=P_\textrm{min}=23$ finds a solution in which the actors $\actor_2$, $\actor_3$, $\actor_4$, and $\actor_5$ are identified as critical, and must therefore never be powered down, i.e., returns the decision vector $\DecisionVector = (\DecisionVariable{1}, \DecisionVariable{2}, \DecisionVariable{3}, \DecisionVariable{4}, \DecisionVariable{5}, \DecisionVariable{6}) = (1,0,0,0,0,1)$.
\Cref{fig:hybrid-schedule} shows the corresponding schedule of such a hybrid implementation.
\par
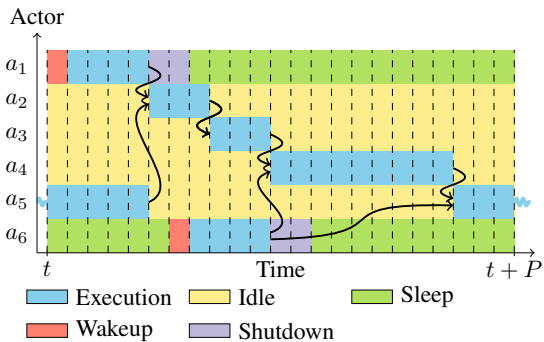
\begin{figure}[ht]
  \vspace{-3mm}
  \centering
  \scalebox{0.9}{\begin{tikzpicture}[xscale=0.3, yscale=0.5]


    

    

    \node[left] at (0.5,0.5) {$\actor_6$};
    \node[left] at (0.5,1.5) {$\actor_5$};
    \node[left] at (0.5,2.5) {$\actor_4$};
    \node[left] at (0.5,3.5) {$\actor_3$};
    \node[left] at (0.5,4.5) {$\actor_2$};
    \node[left] at (0.5,5.5) {$\actor_1$};

    \node[below] at (1,0) {$t$};
    \draw[-] (1,-0.1) -- (1,0.1);
    \node[below] at (24,0) {$t + P$};
    \draw[-] (24,-0.1) -- (24,0.1);
    

    \fill[pastel-yellow] (0 +1,0) rectangle (23 +1,6); 
    \fill[pastel-green] (0 +1,5) rectangle (23 +1,6); 
    \fill[pastel-green] (0 +1,0) rectangle (23 +1,1);

    \fill[pastel-blue]  (1 +1,5) rectangle (5 +1,6);
    \fill[pastel-blue]  (5 +1,4) rectangle (8 +1,5);
    \fill[pastel-blue]  (8 +1,3) rectangle (11 +1,4);
    \fill[pastel-blue]  (11 +1,2) rectangle (20 +1,3);
    \fill[pastel-blue]  (20 +1,1) rectangle (23 +1,2);
    \fill[pastel-blue]  (0 +1,1) rectangle (5 +1,2);
    \fill[pastel-blue]  (7 +1,0) rectangle (11 +1,1);

    \fill[pastel-red]  (0 +1,5) rectangle (1 +1,6); 
    \fill[pastel-lavender]  (5 +1,5) rectangle (7 +1,6); 
    \fill[pastel-red]  (6 +1,0) rectangle (7 +1,1);
    \fill[pastel-lavender]  (11 +1,0) rectangle (13 +1,1);


    \draw[dashed] (1,0) -- (1,6);
    \draw[dashed] (2,0) -- (2,6);
    \draw[dashed] (3,0) -- (3,6);
    \draw[dashed] (4,0) -- (4,6);
    \draw[dashed] (5,0) -- (5,6);
    \draw[dashed] (6,0) -- (6,6);
    \draw[dashed] (7,0) -- (7,6);
    \draw[dashed] (8,0) -- (8,6);
    \draw[dashed] (9,0) -- (9,6);
    \draw[dashed] (10,0) -- (10,6);
    \draw[dashed] (11,0) -- (11,6);
    \draw[dashed] (12,0) -- (12,6);
    \draw[dashed] (13,0) -- (13,6);
    \draw[dashed] (14,0) -- (14,6);
    \draw[dashed] (15,0) -- (15,6);
    \draw[dashed] (16,0) -- (16,6);
    \draw[dashed] (17,0) -- (17,6);
    \draw[dashed] (18,0) -- (18,6);
    \draw[dashed] (19,0) -- (19,6);
    \draw[dashed] (20,0) -- (20,6);
    \draw[dashed] (21,0) -- (21,6);
    \draw[dashed] (22,0) -- (22,6);
    \draw[dashed] (23,0) -- (23,6);
    \draw[dashed] (24,0) -- (24,6);


    \draw[->, thick] (6,5.5) .. controls (8,5) and (4,5) .. (6,4.6);
    \draw[->, thick] (6,1.5) .. controls (8,2) and (4,4) .. (6,4.4);
    \draw[->, thick] (9,4.5) .. controls (11,4) and (7,4) .. (9,3.5);
    \draw[->, thick] (9,4.5) .. controls (11,4) and (7,4) .. (9,3.5);
    \draw[->, thick] (12,3.5) .. controls (14,3) and (10,3) .. (12,2.6);
    \draw[->, thick] (12,0.6) .. controls (14,1) and (10,2) .. (12,2.4);
    \draw[->, thick] (21,2.5) .. controls (23,2) and (19,2) .. (21,1.6);
    \draw[->, thick] (12,0.4) .. controls (19,0.5) and (14,1.5)  .. (21,1.4);

    \draw[snake it, ultra thick, color=pastel-blue] (24,1.5) -- (24.8,1.5);
    \draw[snake it, ultra thick, color=pastel-blue] (0.5,1.5) -- (1,1.5);

    \draw[->] (0.5,0) -- (25,0); 
    \draw[->] (0.5,0) -- (0.5,6.5) node[above] {Actor};
    \node[below] at (12.5,0) {Time};

  \begin{scope}[shift={(0,-1.5)}, every node/.style={anchor=west}]
        \draw[fill=pastel-blue, draw=black] (0,0) rectangle (2,0.4);
        \node at (2,0.2) {Execution};
        \draw[fill=pastel-yellow, draw=black] (8,0) rectangle (10,0.4);
        \node at (10,0.2) {Idle};
        \draw[fill=pastel-green, draw=black] (16,0) rectangle (18,0.4);
        \node at (18,0.2) {Sleep};
        \draw[fill=pastel-red, draw=black] (0,-1) rectangle (2,-0.6);
        \node at (2,-0.8) {Wakeup};
        \draw[fill=pastel-lavender, draw=black] (8,-1) rectangle (10,-0.6);
        \node at (10,-0.8) {Shutdown};
    \end{scope}
    
\end{tikzpicture}}
  \caption{\label{fig:hybrid-schedule}%
    Schedule of a hybrid implementation of the \ac{AEC} network with $P=P_\mathrm{min}=23$.
    Actors $a_1$ and $a_6$ are set to self-powered mode, while the others must be set to always-active mode, to guard this minimal period.}     
\end{figure}
There, actors $\actor_2$, $\actor_3$, $\actor_4$, and $\actor_5$ are always active, while actors $\actor_1$ and $\actor_6$ are disposed to self-powered mode.
The minimal energy determined by the \ac{MILP}, determined by \cref{eq:totalenergy}, amounts to 146\,pJ.
This signifies a savings of $20\,\%$ over the fully always-active case depicted in~\cref{fig:anc-schedule}, without any adverse effect on the period, resp. throughput.
\end{example}

\begin{example}
In contrast, a schedule of the \ac{AEC} network when all actors are self-powered is depicted in~\cref{fig:self-powered-schedule}.
As discussed above, a schedule with period $P_\mathrm{max}=27$ is obtained when solving the $\texttt{LP}(\graph, \vec{1})$, indicating a degradation of $17\,\%$ compared to the schedule depicted in~\cref{fig:hybrid-schedule}.
However, it also results in an $44\,\%$ lower energy per iteration.
Naturally, there are more tradeoff points between the two schedules depicted in~\cref{fig:hybrid-schedule,fig:self-powered-schedule}, which will be determined by our proposed novel \ac{DSE} approach, as described in the next section.
\end{example}

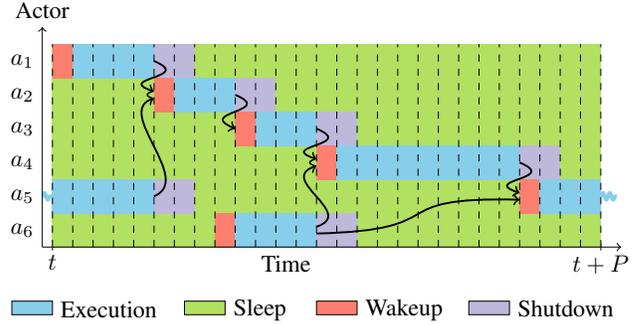
\begin{figure}[ht]
  \centering
  \scalebox{0.9}{\begin{tikzpicture}[xscale=0.3, yscale=0.5]
    \node[left] at (0.5,0.5) {$\actor_6$};
    \node[left] at (0.5,1.5) {$\actor_5$};
    \node[left] at (0.5,2.5) {$\actor_4$};
    \node[left] at (0.5,3.5) {$\actor_3$};
    \node[left] at (0.5,4.5) {$\actor_2$};
    \node[left] at (0.5,5.5) {$\actor_1$};

    \node[below] at (1,0) {$t$};
    \draw[-] (1,-0.1) -- (1,0.1);
    \node[below] at (28,0) {$t + P$};
    \draw[-] (28,-0.1) -- (28,0.1);
    

    \fill[pastel-green]    ( 0+1,0) rectangle (27+1,6); 

    \fill[pastel-red]      ( 0+1,5) rectangle ( 1+1,6);
    \fill[pastel-blue]     ( 1+1,5) rectangle ( 5+1,6);
    \fill[pastel-lavender] ( 5+1,5) rectangle ( 7+1,6);

    \fill[pastel-red]      ( 5+1,4) rectangle ( 6+1,5);
    \fill[pastel-blue]     ( 6+1,4) rectangle ( 9+1,5);
    \fill[pastel-lavender] ( 9+1,4) rectangle (11+1,5);

    \fill[pastel-red]      ( 9+1,3) rectangle (10+1,4);
    \fill[pastel-blue]     (10+1,3) rectangle (13+1,4);
    \fill[pastel-lavender] (13+1,3) rectangle (15+1,4);

    \fill[pastel-red]      (13+1,2) rectangle (14+1,3);
    \fill[pastel-blue]     (14+1,2) rectangle (23+1,3);
    \fill[pastel-lavender] (23+1,2) rectangle (25+1,3);

    \fill[pastel-red]      (23+1,1) rectangle (24+1,2);
    \fill[pastel-blue]     (24+1,1) rectangle (27+1,2);
    \fill[pastel-blue]     ( 0+1,1) rectangle ( 5+1,2);
    \fill[pastel-lavender] ( 5+1,1) rectangle ( 7+1,2);

    \fill[pastel-red]      ( 8+1,0) rectangle ( 9+1,1);
    \fill[pastel-blue]     ( 9+1,0) rectangle (13+1,1);
    \fill[pastel-lavender] (13+1,0) rectangle (15+1,1); 

    \draw[dashed] (1,0) -- (1,6);
    \draw[dashed] (2,0) -- (2,6);
    \draw[dashed] (3,0) -- (3,6);
    \draw[dashed] (4,0) -- (4,6);
    \draw[dashed] (5,0) -- (5,6);
    \draw[dashed] (6,0) -- (6,6);
    \draw[dashed] (7,0) -- (7,6);
    \draw[dashed] (8,0) -- (8,6);
    \draw[dashed] (9,0) -- (9,6);
    \draw[dashed] (10,0) -- (10,6);
    \draw[dashed] (11,0) -- (11,6);
    \draw[dashed] (12,0) -- (12,6);
    \draw[dashed] (13,0) -- (13,6);
    \draw[dashed] (14,0) -- (14,6);
    \draw[dashed] (15,0) -- (15,6);
    \draw[dashed] (16,0) -- (16,6);
    \draw[dashed] (17,0) -- (17,6);
    \draw[dashed] (18,0) -- (18,6);
    \draw[dashed] (19,0) -- (19,6);
    \draw[dashed] (20,0) -- (20,6);
    \draw[dashed] (21,0) -- (21,6);
    \draw[dashed] (22,0) -- (22,6);
    \draw[dashed] (23,0) -- (23,6);
    \draw[dashed] (24,0) -- (24,6);
    \draw[dashed] (25,0) -- (25,6);
    \draw[dashed] (26,0) -- (26,6);
    \draw[dashed] (27,0) -- (27,6);
    \draw[dashed] (28,0) -- (28,6);

    \draw[->, thick] (6,5.5) .. controls (8,5) and (4,5) .. (6,4.6);
    \draw[->, thick] (6,1.5) .. controls (8,2) and (4,4) .. (6,4.4);
    \draw[->, thick] (10,4.5) .. controls (12,4) and (8,4) .. (10,3.5);
    \draw[->, thick] (14,3.5) .. controls (16,3) and (12,3) .. (14,2.6);
    \draw[->, thick] (14,0.6) .. controls (16,1) and (12,2) .. (14,2.4);
    \draw[->, thick] (24,2.5) .. controls (26,2) and (22,2) .. (24,1.6);
    \draw[->, thick] (14,0.4) .. controls (21,0.5) and (17,1.5)  .. (24,1.4);

    \draw[snake it, ultra thick, color=pastel-blue] (28,1.5) -- (28.8,1.5);
    \draw[snake it, ultra thick, color=pastel-blue] (0.5,1.5) -- (1,1.5);

    \draw[->] (0.5,0) -- (29,0); 
    \draw[->] (0.5,0) -- (0.5,6.5) node[above] {Actor};
    \node[below] at (12.5,0) {Time};

    \begin{scope}[shift={(-4,-2)}, every node/.style={anchor=west}]
        \draw[fill=pastel-blue, draw=black] (3,0) rectangle (5,0.4);
        \node at (5,0.17) {Execution};
        
        \draw[fill=pastel-green, draw=black] (11.5,0) rectangle (13.5,0.4);
        \node at (13.5,0.13) {Sleep};
        
        \draw[fill=pastel-red, draw=black] (18,0) rectangle (20,0.4);
        \node at (20,0.13) {Wakeup};
        
        \draw[fill=pastel-lavender, draw=black] (25.5,0) rectangle (27.5,0.4);
        \node at (27.5,0.17) {Shutdown};
    \end{scope}
    
\end{tikzpicture}}
  \caption{\label{fig:self-powered-schedule}%
    Schedule of the \ac{AEC} network with $P=P_\mathrm{max}=27$, obtained when all actors are self-powered.}
\end{figure}

\section{Design Space Exploration}\label{sec:dse}

From the previously introduced examples, we can conclude that there can be quite a huge range of tradeoff solutions between the choice of period $P$ (i.e., throughput) and the achievable energy $E$ of resulting \ac{DFG} implementations.
For performing a \acf{DSE} of these tradeoffs, we next discuss two alternative brute-force search strategies before proposing \emph{Hop and Skip}, a novel and efficient \ac{DSE} strategy.
\par
For the explanation of the individual strategies, we use as an example a \acf{SDF}~\cite{E.Lee87SDF} graph called Samplerate from~\cite{sdf3}.%
\footnote{An \ac{SDF} graph can be unrolled by the algorithm from~\cite{E.Lee86ACH} to generate an equivalent marked graph that adheres to the semantics described in this paper.}

\subsection{Decision Variable Sweep\label{sec:dse-xs}}

A (true) Pareto front of energy/troughput solutions can be determined by an exhaustive sweep across all $2^\SetCard{\SetActors}$ different combinations of the decision vector $\DecisionVector \in \{0,1\}^\SetCard{\SetActors}$ (see~\crefrange{alg:dse-xs:loop-start}{alg:dse-xs:loop-end} of~\cref{alg:dse-xs}), and apply the \ac{LP} from~\cref{def:LP} to determine the minimum sustainable period $P$ (see~\cref{alg:dse-xs:loop-body}) and the corresponding energy per period (see~\cref{alg:dse-xs:loop-add-ep}) for each configuration.
A decision-variable sweep requires solving an \ac{LP} for each decision vector $\DecisionVector$, with an \ac{LP} being solvable in polynomial time.
However, the number of configurations grows exponentially in the number of actors $\SetCard{\SetActors}$.
Hence, for large networks, such a strategy is not feasible.
\par
\begin{algorithm}
  \caption{Decision Variable Sweep $\texttt{DSE\_XS}(\graph)$}\label{alg:dse-xs}
  \DontPrintSemicolon
  \SetKwFunction{FLP}{LP}
  \SetKwFunction{FILP}{MILP}
  \KwIn  {\hspace{2.5mm}\parbox{25mm}{\ac{DFG}} $\graph =(\SetActors,\SetChannels)$}
  \KwOut {\parbox{24mm}{Explored points} $EP$}
  \tcp{Initialize explored points set} 
  $EP \gets \emptyset$ \; \label{alg-xs:dse:clear-ep}
  \tcp{Sweep over the design space $\{0,1\}^\SetCard{\SetActors}$}
  \ForEach {$\DecisionVector \in \{0,1\}^\SetCard{\SetActors}$\label{alg:dse-xs:loop-start}}{
    $P \gets$ \FLP{$\graph, \DecisionVector$}\; \label{alg:dse-xs:loop-body}
    $EP \gets EP \cup \{ (P,E(\SetActors,P,\DecisionVector),\DecisionVector) \}$\;  \label{alg:dse-xs:loop-add-ep}
  }\label{alg:dse-xs:loop-end}
  \Return $EP$\;\label{alg:dse-xs:return-ep}
\end{algorithm}

\begin{example}
  \Cref{fig:dse-xs} shows the objective space covered by the decision-variable sweep $\texttt{DSE\_XS}(\graph)$.
  On the x-axis, the period $P$ is shown, while the y-axis represents the total energy $E$ per period of a found schedule.
  Each of the $2^6=64$ blue circles represents a unique design configuration $\DecisionVector \in \{0,1\}^\SetCard{\SetActors}$, while the red 'o's mark non-dominated points.
  For the Samplerate benchmark, the exploration using \cref{alg:dse-xs} took a total of 2.23 CPU seconds on an Intel(R) Core(TM) i7-14700K machine.
\end{example}

\begin{theorem}\label{theorem:pareto}
  Given a \ac{DFG} $\graph$, the set of non-dominated points of the set $EP$ explored by the decision-variable sweep strategy $\texttt{DSE\_XS(\graph)}$ coincides with the (true) Pareto front of energy/throughput solutions.
\end{theorem}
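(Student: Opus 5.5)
We first make precise what the ``true'' Pareto front is. The design space is the set of all pairs $(\DecisionVector, P)$ with $\DecisionVector \in \{0,1\}^{\SetCard{\SetActors}}$ and $P \in \RationalsNonNegative$ such that the constraint system of \cref{def:LP} (\cref{lp:self-loop,lp:channel}) admits a start-time vector $\SchedStartTimes$; we call such a pair \emph{feasible}. Each feasible pair is mapped to the objective point $(P, E(\SetActors,P,\DecisionVector))$ given by \cref{eq:totalenergy}. The true Pareto front $\ParetoFront$ consists of the non-dominated points of this image set, where both $P$ and $E$ are minimized. With this definition, the proof reduces to showing two inclusions between $\ParetoFront$ and the set of non-dominated points of $EP$.

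The key lemma concerns a fixed configuration $\DecisionVector$. Let $P^*(\DecisionVector)=\texttt{LP}(\graph,\DecisionVector)$. We will show three things: the feasible periods for $\DecisionVector$ form exactly the ray $[P^*(\DecisionVector),\infty)$, and $E(\SetActors,P,\DecisionVector)$ is nondecreasing in $P$ on that ray.
\begin{itemize}
\item \emph{Upward closure.} If $(\SchedStartTimes,P)$ is feasible and $P'\ge P$, then $(\SchedStartTimes,P')$ is also feasible. This holds because $\InitialTokens\ge 0$, so the left-hand side of \cref{lp:channel} can only grow with the period, and \cref{lp:self-loop} is a pure lower bound on the period.
\item \emph{Minimum attained.} By definition of the \ac{LP}, no feasible period lies below $P^*(\DecisionVector)$. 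The minimum is attained because the \ac{LP} is feasible (take $P$ large and the start times along a topological order of the token-free edges) and bounded below by $0$.
\item \emph{Monotone energy.} Differentiating \cref{eq:E_Always_On,eq:E_Self_Powered} with respect to $P$ gives slopes $\IdlePower{i}\ge 0$ and $\SleepPower{i}\ge 0$. Hence $E$ is a sum of nondecreasing affine functions of $P$.
\end{itemize}
Consequently, among all feasible pairs sharing the same $\DecisionVector$, the point $(P^*(\DecisionVector),E(\SetActors,P^*(\DecisionVector),\DecisionVector))$ weakly dominates every other point. This is exactly the point that \cref{alg:dse-xs} inserts into $EP$ at \cref{alg:dse-xs:loop-add-ep}.

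For the first inclusion, take any point $q\in\ParetoFront$, generated by some feasible pair $(\DecisionVector,P)$. The point $p$ that $EP$ records for $\DecisionVector$ weakly dominates $q$, and $p$ itself lies in the full image set. If $p\neq q$, then $p$ strictly dominates $q$, contradicting $q\in\ParetoFront$. So $q=p\in EP$, and $q$ is non-dominated within $EP\subseteq$ image set. For the converse, let $p\in EP$ be non-dominated in $EP$, and suppose some feasible point $q$ dominates it. The recorded point of $q$'s configuration weakly dominates $q$, and by transitivity it dominates $p$ within $EP$, a contradiction. Since $EP$ is finite, every such argument terminates, and the two sets coincide.

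The main obstacle is the modeling step rather than the order-theoretic argument: we must justify that the ``true'' front ranges over all pairs $(\DecisionVector, P)$, including periods larger than the \ac{LP} optimum, and that nothing outside the \ac{LP}'s periodic-schedule model counts. A second delicate point is the sign assumption on the power parameters. If $\IdlePower{i}$ or $\SleepPower{i}$ could be negative, the monotonicity step would fail. We will state nonnegativity of all power parameters explicitly as a physical assumption, and note that it is the only place where the power model enters the proof.
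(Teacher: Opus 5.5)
Your proof is correct and follows the same route as the paper's. For each fixed $\DecisionVector$, monotonicity of $E(\SetActors,P,\DecisionVector)$ in $P$ (with slopes $\IdlePower{i}$ and $\SleepPower{i}$) makes the \ac{LP}-optimal point weakly dominate every other solution with that configuration, and $EP$ collects this point for every configuration; your version is simply more careful, spelling out both inclusions and needing only nonnegative rather than strictly positive idle and sleep powers, while the upward-closure step is never actually used because $P^*(\DecisionVector)\le P$ for every feasible $P$ already suffices.
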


\begin{proof}
  First, observe that the energy cost function of each actor in~\cref{eq:totalenergy} is linear in $P$, with a positive gradient for both cases of $\actor_i$ being either executed in always-active mode or in self-powered mode (corresponding to $\IdlePower{i}$ in \cref{eq:E_Always_On} and $\SleepPower{i}$ in \cref{eq:E_Self_Powered}).
  Therefore, the overall cost function $E(\SetActors,P,\DecisionVector)$ is strictly increasing with increasing values of $P$; hence, minimizing $P$ for a given configuration $\DecisionVector$ also minimizes $E(\SetActors,P,\DecisionVector)$ for the same $\DecisionVector$.
  As we may assume that the \ac{LP} given in~\cref{def:LP} is able to determine a feasible schedule of minimal period $P$ supporting $\DecisionVector$, the energy $E(\SetActors,\texttt{LP}(\graph, \DecisionVector),\DecisionVector)$ is the minimal energy for configuration $\DecisionVector$.
  \par
  Thus, the point $(\texttt{LP}(\graph, \DecisionVector),E(\SetActors,\texttt{LP}(\graph, \DecisionVector),\DecisionVector),\DecisionVector)$ is a non-dominated point among all solutions with the same configuration $\DecisionVector$.
  As $EP = \{(\texttt{LP}(\graph, \DecisionVector),E(\SetActors,\texttt{LP}(\graph, \DecisionVector),\DecisionVector),\DecisionVector) \mid \DecisionVector\in\{0,1\}^\SetCard{\SetActors}\}$ contains all these candidates for all configurations, we can conclude it also contains the Pareto-optimal points as the set of non-dominated points within $EP$.
\end{proof}

\subsection{Period Sweep\label{sec:period-sweep}}

As the number of configurations grows exponentially with the number of actors, rendering the decision-variable sweep strategy infeasible for complex networks, there is a need to investigate alternative \ac{DSE} strategies.
A simple exhaustive search strategy is to sweep the space of all integral periods and determine an energy-minimal implementation for each period simply by solving the \ac{MILP} as introduced in~\cref{def:MILP}.
The period sweep starts from the minimum period $P_{\mathrm{min}}$ yielded by the \ac{LP} with all actors configured to always-active, and ends with the maximum period $P_{\mathrm{max}}$, which the \ac{LP} similarly determines with all actors configured to self-powering mode.
Thus, a sweep solves the \ac{MILP} in~\cref{def:MILP} exactly $P_{\mathrm{max}} - P_{\mathrm{min}} + 1$ times, as elaborated in \cref{alg:dse-ps}.

\begin{algorithm}
  \caption{Period Sweep $\texttt{DSE\_PS}(\graph)$}\label{alg:dse-ps}
  \DontPrintSemicolon
  \SetKwFunction{FLP}{LP}
  \SetKwFunction{FILP}{MILP}
  \KwIn  {\hspace{2.5mm}\parbox{25mm}{\ac{DFG}} $\graph =(\SetActors,\SetChannels)$}
  \KwOut {\parbox{24mm}{Explored points} $EP$}
  \tcp{Always-active schedule period}
  $P_{\textrm{min}} \gets $\FLP{$\graph, \vec{\mathbf{0}}$} \; \label{alg:dse-ps:pmin}
  \tcp{Self-powered schedule period}
  $P_{\textrm{max}} \gets $\FLP{$\graph, \vec{\mathbf{1}}$} \;  \label{alg-ps:dse:pmax}
  \tcp{Initialize explored points set}
  $EP \gets \emptyset$ \;  \label{alg:dse-ps:clear-ep}
  \tcp{Sweep periods between $P_{\textrm{min}}$ and $P_{\textrm{max}}$}
  \ForEach{$P \in [P_{\textrm{min}},P_{\textrm{max}}] \cap \NonNegativeIntegers$\label{alg:dse-ps:loop-start}}{
    $\DecisionVector \gets$ \FILP{$\graph, P$}\; \label{alg:dse:loop-milp}
    $EP \gets EP \cup \{ (P,E(\SetActors,P,\DecisionVector),\DecisionVector) \}$\;  \label{alg:dse-ps:loop-add-ep}
  }\label{alg:dse-ps:loop-end}
  \Return $EP$\;\label{alg:dse-ps:return-ep}
\end{algorithm}

\begin{example}
  \Cref{fig:dse-ps} depicts the objective space covered by the above-mentioned period-sweep strategy applied to the network Samplerate~\cite{sdf3} with a total of 6 \ac{SDF} actors.
  The green diamond represents the configuration where all actors are configured to always-active mode, with a resulting period of $P = P_{\textrm{min}} = 160$, while the green triangle represents configurations where all the actors are self-powered, with a resulting period $P =_{\textrm{max}}=320$.
  The blue circles represent all points in the objective space determined each by one \ac{MILP} solver run, thus a total of $161$ \ac{MILP} solver runs.
  Finally, the non-dominated points are identified and marked by red 'o's.
  However, the whole experiment took a total exploration time of 23.2 CPU seconds. 
  Therefore, if the design space of the decision vector $\DecisionVector \in \{0,1\}^\SetCard{\SetActors}$ is rather small (small node number \acp{DFG}), a decision variable sweep might be faster.
\end{example}

Note that often, integer periods are desired or even required for complexity reasons.
For example, when realized in hardware, integer periods can be associated with clock cycles.
If rational periods are allowed, the above sweep over only integer periods might not be able to find any potential Pareto-optimal points with rational periods $P \in \RationalsNonNegative$.

\subsection{Hop \& Skip \ac{DSE} Strategy\label{sec:dse-sub}}

The aforementioned period sweep strategy may also require significant computational effort, with $P_{\textrm{max}}-P_{\textrm{min}}+1$ \ac{MILP} solver runs, underscoring a need for a faster \ac{DSE} strategy.
In the following, we propose an efficient exploration algorithm named Hop and Skip (\gls{hnf}).
Similar to \cref{alg:dse-ps}, it starts by determining the period search limits $P_{\mathrm{min}}$ and $P_{\mathrm{max}}$, see~\cref{alg:dse:pmin,alg:dse:pmax} of the following~\cref{alg:dse}.
These bounds are determined using the \ac{LP} with all actors configured to always-active (see~\cref{alg:dse:pmin}) and self-powering mode (see~\cref{alg:dse:pmax}), identical to the period bounds as described in~\cref{sec:period-sweep}.
\par
The set of explored points $EP$ is initially cleared (see~\cref{alg:dse:clear-ep}).
Subsequently, the exploration of the design space is started with the period $P =  P_{\mathrm{max}}$ in~\cref{alg:dse:start-from-pmax}.
The search space of potentially non-dominated candidates is then efficiently traversed from higher values of $P$ to lower values, see the while loop from~\crefrange{alg:dse:loop-start}{alg:dse:loop-end}, terminating when the period $P$ is smaller than $P_{\mathrm{min}}$.
\par
In the loop body, the \ac{MILP} is called (see~\cref{alg:dse:loop-milp}) to first determine a decision vector $\DecisionVector$, corresponding to \emph{hopping} to a minimum-energy design configuration that sustains the period $P$.
For the determined configuration with decision vector $\DecisionVector$, the minimum achievable period $P'$ is then determined by solving the \ac{LP} (see~\cref{alg:dse:loop-lp}), potentially reducing the period $P$ to $P'$.
This step, similar to \emph{skipping} over all candidate values from $P$ to $P'$, allows for a more efficient traversal over the period space, without requiring to check all integer periods like in \cref{alg:dse-ps}, or all $2^\SetCard{\SetActors}$ configurations like in \cref{alg:dse-xs}.
In~\cref{alg:dse:loop-add-ep}, this newly explored point $(P',E(\SetActors,P,\DecisionVector),\DecisionVector)$ is added to the set of explored points $EP$.
Then, the period $P'$ is decremented by a user-defined threshold value $\varepsilon$ to determine the next lower $P$ value in~\cref{alg:dse:loop-decrement-period}, and the loop is iterated until the end is reached when $P < P_{\mathrm{min}}$.
Finally, the set of explored points $EP$ is returned (see~\cref{alg:dse:return-ep}).
  
\begin{algorithm}
  \caption{\gls{hnf} Strategy $\texttt{H\&S}(\graph)$}\label{alg:dse}
  \DontPrintSemicolon
  \SetKwFunction{FLP}{LP}
  \SetKwFunction{FILP}{MILP}
  \KwIn  {\hspace{2.5mm}\parbox{25mm}{\ac{DFG}} $\graph =(\SetActors,\SetChannels)$}
  \KwOut {\parbox{24mm}{Explored points} $EP$}
  \tcp{Always-active schedule period}
  $P_{\textrm{min}} \gets $\FLP{$\graph, \vec{\mathbf{0}}$} \; \label{alg:dse:pmin}
  \tcp{Self-powered schedule period}
  $P_{\textrm{max}} \gets $\FLP{$\graph, \vec{\mathbf{1}}$} \;  \label{alg:dse:pmax}
  \tcp{Initialize explored points set}
  $EP \gets \emptyset$ \;  \label{alg:dse:clear-ep}
  $P \gets P_{\textrm{max}}$\; \label{alg:dse:start-from-pmax}
  \While{$P \geq P_{\textrm{min}}$\label{alg:dse:loop-start}}{
    $\DecisionVector \gets$ \FILP{$\graph, P$} \tcp*{Hop step}  \label{alg:dse:loop-milp}
    $P' \gets$ \FLP{$\graph,\DecisionVector$} \;   \label{alg:dse:loop-lp} 
    $EP \gets EP \cup \{ (P',E(\SetActors,P',\DecisionVector),\DecisionVector) \}$\;  \label{alg:dse:loop-add-ep}
    $P\gets P'-\varepsilon$ \tcp*{Skip step} \label{alg:dse:loop-decrement-period}
  }\label{alg:dse:loop-end}
  \Return $EP$\;\label{alg:dse:return-ep}
\end{algorithm}

\begin{figure}
  \centering
  \vspace{-2mm}
  \subfloat[Decision variable sweep]{
    \hspace{-3mm}
    \input{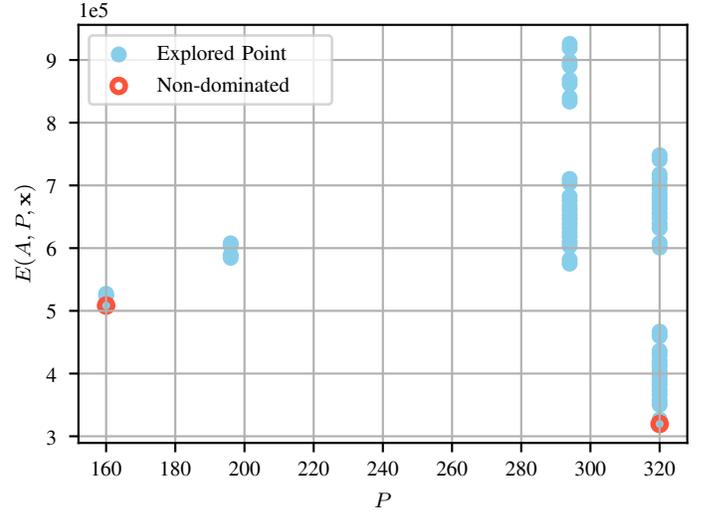}
    \label{fig:dse-xs}
  }

  \subfloat[Period sweep]{
      \hspace{-3mm}
      \input{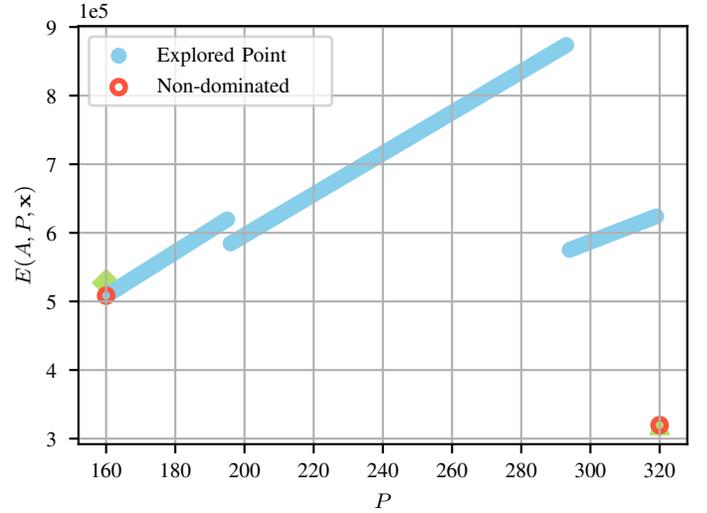}
      \label{fig:dse-ps}
  }

  \subfloat[\gls{hnf}]{
    \hspace{-3mm}
    \input{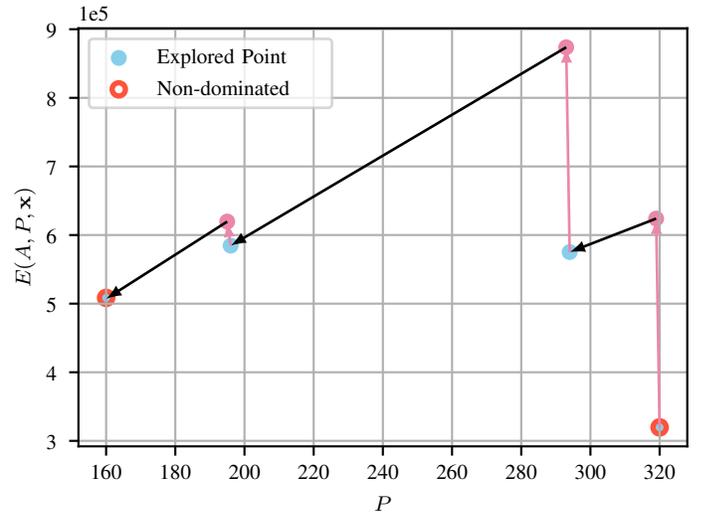}
    \label{fig:dse-new}
  }
  \caption{Explored design points showing tradeoffs between $P$ and $E$ for the Samplerate benchmark~\cite{sdf3} yielded by different \ac{DSE} strategies.
    The green diamond and the triangle correspond to the configurations in which all actors are always active ($\DecisionVector = \vec{0}$), respectively, self-powered ($\DecisionVector = \vec{1}$).}
  \label{fig:dse}
\end{figure}

\begin{example}
  \Cref{fig:dse-new} shows the points explored by the \gls{hnf} strategy.
  The first loop iteration starts with $P=P_{\textrm{max}}=320$. Solving the \ac{MILP} finds the fully self-powered configuration $\DecisionVector=\vec{1}$, and solving the \ac{LP} with $\DecisionVector=\vec{1}$ does not find a lower period $P'<P$.
  In the next loop iteration, with $\varepsilon=0.1$, $P=P_{\textrm{max}}-0.1$, the \ac{MILP} is employed to guard the period and returns the lowest energy configuration $\DecisionVector$.
  This \ac{MILP} step can be visualized by the pink \textit{hop} arrow to the pink circle with objective vector $(P,E) = (P_\textrm{max}-\varepsilon,E(\SetActors,P_\textrm{max}-\varepsilon),\DecisionVector)$.
  The \ac{LP} is then subsequently employed to \textit{skip} to the lowest period $P'$, visualized by the black arrows, supported by the determined configuration $\DecisionVector$.
  The new design point with objectives $(P',E(\SetActors,P',\DecisionVector))$ is appended to $EP$, and the period is then decremented again by $\varepsilon$.
  The process is iterated, with hops and skips yielding design points marked, respectively, by the pink and blue circles, until finally $P < P_{\textrm{min}}$.
  \par
  It is notable to observe that in this example, although significantly fewer points (i.e., only 4) were needed to be evaluated by our proposed \ac{DSE} strategy, both Pareto points were found.
  The \gls{hnf} exploration required 0.83 CPU seconds, which implies a 27.83$\times$ speedup over the period sweep strategy and a 2.67$\times$ speedup over the decision variable sweep strategy.
\end{example}

\section{Experiments and Results}\label{sec:results}

First, in~\cref{sec:AEC-eval}, we introduce a real-world case study to analyze achievable energy savings and achievable throughputs of dataflow networks where (a) all actors are always-active, (b) all actors are self-powered, and (c) all optimal energy and throughput tradeoff points, as found by the proposed \ac{DSE} strategy.
Then, \cref{sec:hnf-eval} provides experimental evaluations on a set of \ac{DFG} graph benchmarks and $100$ random graphs, demonstrating that our proposed \gls{hnf} strategy significantly reduces exploration time compared to brute-force sweeps.

\subsection{Energy and Throughput Tradeoffs for an \ac{AEC} Network}\label{sec:AEC-eval}

In the following, we analyze a real-life data flow application called \ac{AEC} network as presented already in our introductory example in~\cref{fig:echo-cancellation}.
Based on real hardware designs synthesized for each actor of the dataflow flow specification, we  were able to extract also realistic values for power, execution times, and shutdown as well as wake-up delays for each actor of the network using vector-based simulation on the gate-level netlists of each actor implementation using QuestaSim~\cite{questa}.
The application has been initially modeled using SysteMoC~\cite{FalkHTZ17}, realizing the FunState model of computation for actor networks~\cite{ThieleSZET99}.
The gate-level netlists were generated using \ac{HLS}~\cite{catapultC} and \ac{RTL}~\cite{synopsys} synthesis tools, utilizing a 28\,nm process~\cite{synopsys-edk}.
Finally, the power values were obtained using PrimePower~\cite{synopsys}, while the execution times and delays were observed from the vector traces.

For all self-powering implementations, we used clock-gating as the power-saving strategy as described in~\cite{karim2024-selfpoweringDFGs}.
We observed a constant shutdown delay $\ShutdownDelay{i}=2$ and wake-up delay $\WakeupDelay{i}=1$ clock cycles, irrespective of the actor.
Note that these delay values might usually vary between actors and also depend on the chosen power-saving strategy, such as power-gating~\cite{karim2025-selfpoweringDFGs-mbmv,darne2026-selfpoweringDFGs-FeMFET-state-retention}.
The execution times $\ExecutionTime{i}$ were obtained by synthesis and used already in the running example introduced in~\cref{sec:fundamentals}.
\par
For \ac{DSE} of energy and throughput tradeoffs, we then applied the $\DecisionVector$ Sweep strategy described in~\cref{alg:dse-xs} to the \ac{AEC} network, using the parameter values obtained by the hardware synthesis.
Based on our proof in~\cref{sec:dse-xs}, this algorithm is able to find the (true) Pareto front.
In~\cref{fig:pareto-plot}, all $2^6$ explored solutions are shown as blue dots, of which 5 are marked with red 'o's to denote that they belong to the Pareto front.
\par
\begin{figure}[ht]
    \centering
    \resizebox{ \columnwidth }{!}{\input{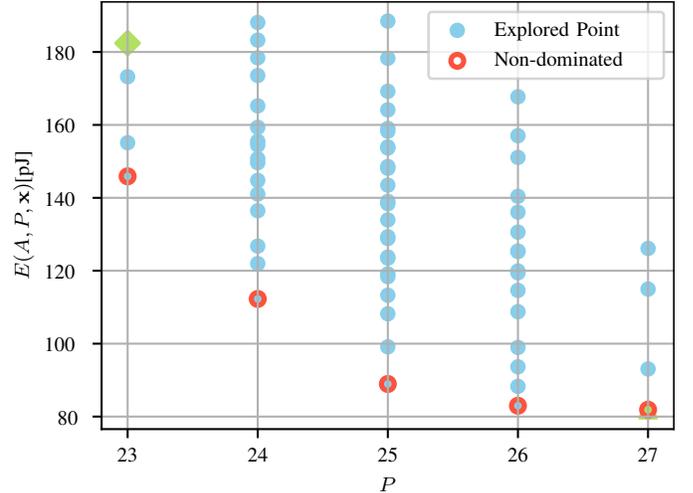}} 
    \caption{Energy $E$ per period $P$ trafeoffs for different hybrid implementations of the \ac{AEC} network.
    The green diamond represents the implementation where all the actors are always active (i.e., $\DecisionVector = \vec{0}$), while the green triangle represents the fully self-powered implementation (i.e., $\DecisionVector = \vec{1}$), which belongs even to the Pareto front of the \ac{AEC} network.}
    \label{fig:pareto-plot}
\end{figure}

Moreover, for comparison, \cref{fig:pareto-plot} also depicts the implementation with all actors being always active as a green diamond, achieving the minimal period $P_\textrm{min} = 23$ and determined as the solution of the $\texttt{LP}(\graph,\vec{0})$ as discussed in~\cref{sec:lp}.
The energy per period is then determined using~\cref{eq:totalenergy} as $E(\SetActors,P_\textrm{min},\vec{0})=182$\,pJ.
However, this solution is not Pareto-optimal as there exists a hybrid configuration $\DecisionVector = (1, 0, 0, 0, 0, 1)$ (see~\cref{fig:hybrid-schedule}), which also sustains the minimal period $P_\textrm{min}$ but only consumes $E(\SetActors,P_\textrm{min},\DecisionVector)=146$\,pJ, thus achieving 20\,\% of energy savings without any throughput degradation.
\par
Finally, if we allow for an increase in the period by $4$ time units, the fully self-powered implementation (i.e., $\DecisionVector=\vec{1}$) becomes feasible, with a period $P_\mathrm{max} = 27$ and determined as the solution of the $\texttt{LP}(\graph,\vec{1})$.
The related energy per period, determined using~\cref{eq:totalenergy}, amounts to $E(\SetActors,P_\textrm{max},\vec{1})=82$\,pJ, accounting for $55\,\%$ savings over the aforementioned fully always-active implementation at an increase of just 17\,\% in the period.

\subsection{Evaluation of Exploration Strategies}\label{sec:hnf-eval}

To elaborate and compare the tradeoffs of the three presented exploration strategies (i.e., \crefrange{alg:dse-xs}{alg:dse}) in detail for larger and more complex benchmarks, we use (a) existing benchmark \acp{DFG} as well as (b) \acp{DFG} generated by a random graph generator, both provided as part of the tool and benchmark suite called "SDF$^3$ - SDF For Free"~\cite{sdf3}.

\begin{enumerate}
  \item[(a)]{SDF$^3$ Benchmarks:}
    We first augmented and analyzed the application networks presented in the "SDF$^3$ - SDF For Free"~\cite{sdf3} benchmark suite.
    These benchmarks represent the following real-world applications, categorized by their topology:
    \begin{itemize}
        \item Cyclic Topology: H.263 Encoder and Modem networks.
        \item Feed-forward Topology:  MP3 Playback, Satellite, Samplerate, MP3 Decoder (configured for granule and block parallelism) networks.
    \end{itemize}
    Since the original SDF$^3$ application networks lack any power and delay annotations, we created needed annotations by performing the following augmentations for each graph:
    First, we assigned to each actor $\actor_i \in \SetActors$ normalized values for power, i.e., $\ExecutionPower{i} = 1$, {$\ShutdownPower{i} = 0.5$, $\WakeupPower{i} = 0.5$, $\SleepPower{i} = 0.1$, and $\IdlePower{i} = 0.9$}, as well as constant delays, i.e., $\ShutdownDelay{i} = 2$ and $\WakeupDelay{i} = 1$.
    In contrast, the execution times $\ExecutionTime{i}$ for each actor were given and extracted directly from the SDF$^3$ application graphs.
    Second, in order to obtain actor-specific power values, we randomly selected a scaling factor $\psi_i \in [1,8]$ for each actor $\actor_i \in \SetActors$ and divided the actor's execution time $\ExecutionPower{i}$ by $\psi_i$, while also multiplying its power $\ExecutionPower{i}$ by $\psi_i$.
    Thus, the actor's execution energy $\ExecutionTime{i}\cdot\ExecutionPower{i}$ remains unaffected.
    Moreover, we also scaled the other power model parameters $\ShutdownPower{i}$, $\WakeupPower{i}$, $\SleepPower{i}$, and $\IdlePower{i}$ by multiplying them with $\psi_i$.

  \item[(b)]{Random Graphs:}
    We also generated a set of 100 \ac{SDF} graphs using the SDF$^3$ random graph generator.
    Each \ac{SDF} graph contains (ideally) 15 actors, and the generation is parameterized so that each actor has an average degree (number of connected channels) of 3, with a variance of 3.
    The consumption and production rates were varied between 1 and 20, with an average of 3 and a variance of 6.
    However, the consumption and production rates were allowed to exceed these limits to ensure that the repetition vector sum of each \ac{SDF} graph is 250 actor firings per period.
    These generated graphs were augmented with randomized delay and power values, creating a statistically diverse set of scenarios.
\end{enumerate}

For the analysis results that will be shown in \cref{tbl:dse-comp,tbl:appendix1}, note that \ac{SDF} graphs  having a total of $\SetCard{\SetActors_{\textrm{SDF}}}$ nodes are converted to marked graphs to be able to apply our analysis techniques.
A conversion described in~\cite{E.Lee86ACH} is applied, which usually leads to a higher number $\SetCard{\SetActors}$.
For the conversion, we also utilize the SDF$^3$ tool suite~\cite{sdf3}.
Fortunately, the decision vector $\DecisionVector$ (and thus the design space) does not increase in size with the unfolded graph.

\textbf{Experimental Setup and Metrics:}
Our forthcoming evaluation consists of:
(a) comparing the non-dominated front generated by the period sweep ($P$ Sweep) and the proposed \gls{hnf} approach, against the true Pareto front returned by the decision-variable sweep ($\DecisionVector$ Sweep); and
(b) comparing the exploration times of each approach.
\par
In general, a \ac{MOP}, in our case, for energy and throughput, does not have a single optimal solution due to multiple conflicting objectives.
Instead, there exists a set of Pareto-optimal solutions.
The set of all such solutions is known as the Pareto front.
The true Pareto front of the \ac{MOP} that is considered in this paper as a reference can be found by applying the $\DecisionVector$ Sweep strategy (\cref{alg:dse-xs}) as was proven in~\cref{theorem:pareto}.
This set acts as the reference Pareto front and is named $\ParetoRef$ in the following.
The quality of each approach and each network can then be evaluated by comparing the sets of non-dominated solutions $\ParetoApp$ of each of the two other \ac{DSE} strategies introduced in \cref{alg:dse-ps,alg:dse} with $\ParetoRef$.
For the comparison of explored non-dominated sets of points, we use an indicator known in literature as \emph{hypervolume}~\cite{Guerrero:2022}.
According to~\cite{hyper}, we first normalize the reference Pareto front $\ParetoRef$ and each non-dominated front $\ParetoApp$ found by $P$ sweep and \gls{hnf} to only contain objective values between zero and one, i.e., $\ParetoRef, \ParetoApp \subset [0,1]^d$ with $d=2$ in our case of the two objectives of energy $E$ and period $P$.
Here, an objective value of zero corresponds to the best (i.e., minimal) value ever found for this objective by all approaches for a given application.
Conversely, an objective value of one corresponds to the worst (i.e., maximal) value ever encountered.
This normalization ensures that each objective is weighted equally in the hypervolume quality measure.
Then, given a (normalized) Pareto front $\ParetoFront \subset [0,1]^d$, the hypervolume of $\ParetoFront$ is the measure of the region weakly dominated\footnote{A point $p \in \Reals^d$ \emph{weakly dominates} a point $q \in \Reals^d$ if $p_i\leq q_i$ for all $1\leq i \leq d$.} by $\ParetoFront$ and bounded above by the reference point $\mathbf{1}$.
\vspace{-5mm}
\par
\begin{equation}\label{eq:hypervolume}
  \textup{hypervolume}(\ParetoFront) = \Lambda( \{ q\in [0,1]^d \mid \exists p \in \ParetoFront : p \leq q\} )
\end{equation}
In~\cref{eq:hypervolume}, $\Lambda(\cdot)$ denotes the Lebesgue measure~\cite{ciesielski1989good}.
The greater hypervolume($S$) is, the better a Pareto front approximation $\ParetoFront$ is considered to be.

\textbf{Discussion of Results:}
\Cref{tbl:dse-comp} provides a comprehensive overview of the performance results of the three introduced \ac{DSE} methods $\DecisionVector$ Sweep, $P$ Sweep, and \gls{hnf}.
The number of actors in the base \ac{SDF} is denoted by $\SetCard{\SetActors_\textrm{SDF}}$, while $\SetCard{\SetActors}$ represents the number of actors of the corresponding unrolled marked graph.
The exploration times in seconds required by each of the strategies are reported,
along with the speedup achieved using the proposed \gls{hnf} strategy compared to both the $\DecisionVector$ Sweep and the $P$ Sweep strategies.
Furthermore, the \emph{hypervolume ratio} $HV$, as defined by~\cref{eq:hypervolume-rel-avg}, is used as a quality indicator of each explored non-dominated set of points $\ParetoApp$.
Note that from the fact that the hypervolume of any non-dominated set cannot exceed the hypervolume of the Pareto front $\ParetoRef$, we always have $HV \leq 1$. 
Moreover, if $HV = 1$, we infer $\ParetoApp = \ParetoRef$.
\begin{equation}\label{eq:hypervolume-rel-avg}
  HV = \frac{\textup{hypervolume}(\ParetoApp)}{\textup{hypervolume}(\ParetoRef)}
\end{equation}
\par
\begin{table*}[h!]
    \centering
    \begin{tabular}{|c|c|c|c|c|c|c|c|c|c|}
    \hline
    \multirow{2}{*}{Network}   & \multirow{2}{*}{$|\SetActors_{\textrm{SDF}}|$} & \multirow{2}{*}{$|\SetActors|$} & \multicolumn{3}{c|}{Exploration Time [s]}                 & \multicolumn{2}{c|}{Hypervolume Ratio} & \multicolumn{2}{c|}{Speedup} \\
                                                                                  \cline{4-10}
                               &    &                   & $\DecisionVector$ Sweep           & $P$ Sweep                        & \gls{hnf}& PS/XS     & \gls{hnf}/XS           & vs $\DecisionVector$ Sweep  & vs $P$ Sweep        \\\hline
        {Satellite}            & 22 & 4515              & Timeout                           & 1495.84                          & 4.28     & -         & -                      &  -                          & 348.93              \\
        Modem                  & 16 & 48                & 264.65                            & 1.51                             & 0.18     & 1         & 1                      & 1452.39$\times$             & 8.30$\times$        \\
        MP3 Decoder (Block)    & 14 & 911               & 628.10                            & 1.54                             & 0.42     & 1         & 1                      & 1489.11$\times$             & 3.65$\times$        \\
        MP3 Decoder (Granule)  & 14 & 27                & 42.50                             & 1.33                             & 0.15     & 1         & 1                      & 289.63$\times$              & 9.09$\times$        \\
        Samplerate             & 6  & 612               & 2.23                              & 23.21                            & 0.83     & 1         & 1                      & 2.67$\times$                & 27.83$\times$       \\
        H263 Encoder           & 5  & 201               & 0.40                              & 1.43                             & 0.30     & 1         & 1                      & 1.36$\times$                & 4.85$\times$        \\
        MP3 Playback           & 4  & 10601             & 20.18                             & 100533.41                        & 42.63    & 1         & 1                      &  0.47$\times$               & 2358.52$\times$     \\
        \hline
    \end{tabular}
    \caption{\ac{DSE} analysis for SDF$^3$ dataflow graph benchmarks}
    \label{tbl:dse-comp}
\end{table*}

We observe that our proposed \gls{hnf} strategy is consistently faster than the $P$ Sweep strategy, achieving a significant speedup of up to $2300\times$.
Moreover, larger speedups are noticed for networks that have a big span between $P_\mathrm{min}$ and $P_\mathrm{max}$.
Impressive speedups are also observed over the $\DecisionVector$ Sweep strategy, except for the MP3 playback network because of the quite small search space of $2^{4}=16$ configurations.
Note that the decision space $\DecisionVector$ has only $\SetCard{\SetActors_\mathrm{SDF}}$ variables, and thus, the decision whether an actor is always-active or self-powered is taken only for each \ac{SDF} actor and not individually for all instances of an actor in the unfolded marked graph.

For the case of the Satellite network, $\DecisionVector$ Sweep failed to explore the design space of $2^{22}$ configurations within 48 hours.
We notice that the $P$ Sweep is faster than $\DecisionVector$ Sweep for large actor networks, whereas for networks with few actors in the \ac{SDFG} ($|\SetActors_{\textrm{SDFG}}| \le 6$), the $\DecisionVector$ Sweep is faster.
\par
A significant observation is that the non-dominated fronts obtained by the \gls{hnf} and $P$ Sweep strategies coincide with the true Pareto fronts $\ParetoRef$ obtained by the $\DecisionVector$ Sweep strategy.
Thus, for all SDF$^3$ benchmark applications, \gls{hnf} finds all Pareto-optimal solutions ($\ParetoFront_\mathrm{\gls{hnf}} = \ParetoRef$).
\par

For the $100$ random graphs, the non-dominated sets returned by our \gls{hnf} strategy match the Pareto fronts obtained by the $\DecisionVector$ Sweep strategy in $98$ out of $100$ cases for $\varepsilon = 1$, while the non-dominated sets obtained by the $P$ Sweep strategy match the true Pareto fronts still in 95 out of 100 cases.
For the 2 cases where $\ParetoFront_\mathrm{\gls{hnf}} \neq \ParetoRef$ ($HV$ ratio $< 1$), we realize that when decrementing the period by $\varepsilon=1$ between iterations (\cref{alg:dse:loop-decrement-period} in \cref{alg:dse}) we might miss to evaluate optimal design points \revised{with a period $P$ lying in this}.
However, when running the same $100$ experiments with a smaller decrement ($\varepsilon=0.1$) already leads to $\ParetoFront_\mathrm{\gls{hnf}} = \ParetoRef$ in all 100 cases.
We identified that for the $5$ cases where $\ParetoFront_{P\textrm{ Sweep}} \neq \ParetoRef$, the $P$ Sweep strategy was obviously not able to find Pareto points that have rational periods.
Notably, \gls{hnf} achieves an average exploration speedup of $29.95\times$ compared to the $P$ Sweep strategy and $342.39\times$ compared to the $\DecisionVector$ Sweep strategy for the 100 graphs tested.
Further facts and figures for the different \ac{DSE} approaches on the 100 graphs are provided in~\cref{tbl:appendix1}.



\section{Related Work}\label{sec:relwork}

In \cite{Jha01LPScheduling, Ahmad15GreenComputing}, two different power management strategies are identified that impact scheduling: \ac{DPM} and \ac{DVS}.
\ac{DPM} involves powering down unused modules to save energy, while in \ac{DVS}, the voltages and frequencies of modules are reduced to fully utilize the idle periods.
The reduction in voltages and/or frequencies leads to power and, eventually, energy savings.
A detailed overview of \ac{DPM} is given in~\cite{Benini00DPM}, while \cite{10.1145/1529255.1529260} discusses \ac{DVS} for dataflow-based \acp{SoC}.
Here, the frequencies of individual functional units are varied, subject to throughput constraints, to optimize for energy.
In \cite{Lu00LPTaskScheduling}, tasks are scheduled to increase the length of the idle period and reduce the number of transitions to and from low-power states to increase energy, using \ac{DPM}.
Moreover, \ac{DPM} integrates well with the notion of dataflow, as is shown by the concept of \emph{self-powering dataflow networks} introduced in~\cite{karim2024-selfpoweringDFGs}.
For this concept, various ways to fully power down an actor and retend its state are realizable, e.g., utilizing non-volatile memories~\cite{karim2025-selfpoweringDFGs-mbmv} or even exploiting upcoming FeMFET-based bit cells for fast and local retention of state~\cite{darne2026-selfpoweringDFGs-FeMFET-state-retention}.
Nonetheless, shutdown and wake-up delays may degrade the achievable throughput, necessitating corresponding analysis techniques.
\par
Extensive studies on the throughput analysis of \acp{DFG} exist in the literature.
The maximum cycle mean~\cite{Fet76,Parhi91OptUnfold} forms the foundation to analyze the throughput of marked graphs, with further extensions explored in \cite{Dasdab98FasterMCM} and \cite{COCHETTERRASSON1998667}.
Apart from marked graphs, there also exist the \ac{SDF}~\cite{E.Lee87SDF} and \ac{CSDF}~\cite{belp_1996-csdf} models of computation, which allow a more concise representation of an application as a \ac{DFG}, but are of equal computational expressiveness~\cite{sgtb_2011-sadf-and-fsm-sadf} as marked graphs because \ac{CSDF} and \ac{SDF} graphs can be translated into marked graphs~\cite{E.Lee86ACH}.
In~\cite{Ghamrain06ThroughputAnalysisSDF}, a well-rounded overview of the throughput analysis of \ac{SDF} graphs is provided, while \cite{BodinK21}, respectively, \cite{KohB22} provide scheduling algorithms for \ac{SDF} and \ac{CSDF} graphs that do not require a translation into marked graphs and, hence, are more computationally efficient than maximum cycle mean-based analysis.
\par
None of the previous works is known to us that treats the problem of performance and power tradeoff analysis for data flow specifications in which actors may have varying execution times due to the fact that they might power down in times of idleness due to either clock- or power-gating induced delays.

\section{Conclusion}\label{sec:conclusion}

The introduction of \acf{DPM} strategies such as clock- and power-gating in dataflow networks has been shown to provide significant energy savings when applied during idle periods.
However, these strategies may degrade achievable throughput due to shutdown and wake-up delays.
Such throughput degradations might be particularly detrimental to signal processing systems that need to sustain a given throughput.
\par
This paper first addresses the fundamental question of whether there exist periodic schedules for dataflow graphs in which, due to data-arrival time dependent activations as a result of shutdown and wake-up delays, the actor execution times may vary.
\par
As a result, we show that if for each actor in such a dataflow graph, the decision is taken that an actor either never goes to sleep (\emph{always-active)} or always runs in a mode called \emph{self-powered}, there exists indeed periodic schedules.
This is shown by  contributing a linear-program formulation for finding a maximal-throughput schedule for a given combination of actors that are exclusively executing in either always-active or self-powered mode in each period (configuration).
\par
Moreover, depending on which configuration is chosen, tradeoffs between throughput and energy savings can be analyzed. This paper presents three strategies for \acf{DSE} of such tradeoffs, including a \ac{DSE} that explores all configurations which is able to determine all Pareto-optimal configurations, an algorithm that avoids to explore all configurations, but finds for a given period $P$ a configuration that has minimal energy $E$, and finally a very efficient strategy called \emph{Hop \& Skip} (H\&S) that is shown to determine also the Pareto front but skipping to search in period intervals in which no Pareto point may exist.
The exploration times as well as the non-dominated sets are compared by the hypervolume.
\gls{hnf} outperforms the other two strategies by speedups of up to three orders of magnitude for real-life and more than $100$ generated benchmark applications of different complexity.

\bibliographystyle{IEEEtran}
\bibliography{literature}

\clearpage
\begin{sidewaystable*}[t]
    \vspace{2mm}
    \centering
    \small
    \setlength{\tabcolsep}{2.5pt}

    \begin{minipage}[h]{9cm}
        \centering
        \begin{tabular}{|c|c|c|c|c|c|c|c|c|c|}
        \hline
        \multirow{2}{*}{$\graph$} & \multirow{2}{*}{$|\SetActors_{\textrm{SDF}}|$} & \multirow{2}{*}{$|\SetActors|$} & \multicolumn{3}{c|}{Exploration Time [s]} & \multicolumn{2}{c|}{Speedup} & \multicolumn{2}{c|}{HV ratio vs XS} \\ 
        \cline{4-10} & & & $\DecisionVector$ Sweep & $P$ Sweep &\gls{hnf} &
        vs $\DecisionVector$ Sweep & vs $P$ Sweep &
        PS/XS & HNF/XS \\ \hline
        1 & 12 & 247 & 109.30 & 55.04 & 1.91 & 57.37$\times$ & 28.89$\times$ & 1.0000 & 1.0000 \\
        2 & 9 & 81 & 2.85 & 15.15 & 0.52 & 5.48$\times$ & 29.09$\times$ & 1.0000 & 1.0000 \\
        3 & 14 & 149 & 468.28 & 47.24 & 3.69 & 126.82$\times$ & 12.79$\times$ & 1.0000 & 1.0000 \\
        4 & 15 & 250 & 890.82 & 41.37 & 2.23 & 399.95$\times$ & 18.57$\times$ & 1.0000 & 1.0000 \\
        5 & 15 & 250 & 3617.65 & 76.22 & 14.24 & 254.05$\times$ & 5.35$\times$ & 1.0000 & 1.0000 \\
        6 & 14 & 249 & 334.22 & 30.62 & 1.75 & 191.40$\times$ & 17.53$\times$ & 1.0000 & 1.0000 \\
        7 & 15 & 250 & 2125.64 & 76.88 & 6.31 & 336.61$\times$ & 12.17$\times$ & 1.0000 & 1.0000 \\
        8 & 13 & 180 & 222.24 & 25.75 & 1.54 & 143.97$\times$ & 16.68$\times$ & 1.0000 & 1.0000 \\
        9 & 14 & 249 & 3084.62 & 127.44 & 10.01 & 308.25$\times$ & 12.74$\times$ & 1.0000 & 1.0000 \\
        10 & 13 & 238 & 126.68 & 23.35 & 1.64 & 77.41$\times$ & 14.27$\times$ & 1.0000 & 1.0000 \\
        11 & 13 & 246 & 233.47 & 49.55 & 3.34 & 69.83$\times$ & 14.82$\times$ & 1.0000 & 1.0000 \\
        12 & 15 & 250 & 776.91 & 41.76 & 3.93 & 197.78$\times$ & 10.63$\times$ & 1.0000 & 1.0000 \\
        13 & 15 & 250 & 2027.11 & 97.63 & 8.49 & 238.87$\times$ & 11.50$\times$ & 1.0000 & 1.0000 \\
        14 & 15 & 250 & 1614.64 & 66.84 & 5.78 & 279.50$\times$ & 11.57$\times$ & 1.0000 & 1.0000 \\
        15 & 15 & 250 & 1256.19 & 57.79 & 4.71 & 266.64$\times$ & 12.27$\times$ & 1.0000 & 1.0000 \\
        16 & 14 & 249 & 861.60 & 48.65 & 4.42 & 195.00$\times$ & 11.01$\times$ & 1.0000 & 1.0000 \\
        17 & 15 & 250 & 1941.18 & 116.59 & 6.37 & 304.76$\times$ & 18.30$\times$ & 1.0000 & 1.0000 \\
        18 & 15 & 250 & 5702.52 & 133.32 & 7.60 & 750.39$\times$ & 17.54$\times$ & 1.0000 & 1.0000 \\
        19 & 15 & 250 & 2660.71 & 137.85 & 5.87 & 453.02$\times$ & 23.47$\times$ & 1.0000 & 1.0000 \\
        20 & 14 & 236 & 547.26 & 81.64 & 3.06 & 178.85$\times$ & 26.68$\times$ & 1.0000 & 1.0000 \\
        21 & 15 & 250 & 957.37 & 77.77 & 8.69 & 110.14$\times$ & 8.95$\times$ & 1.0000 & 1.0000 \\
        22 & 15 & 250 & 2505.03 & 69.47 & 5.10 & 490.92$\times$ & 13.61$\times$ & 1.0000 & 1.0000 \\
        23 & 15 & 250 & 3092.92 & 115.89 & 8.75 & 353.32$\times$ & 13.24$\times$ & 1.0000 & 1.0000 \\
        24 & 11 & 245 & 419.36 & 140.89 & 14.45 & 29.01$\times$ & 9.75$\times$ & 1.0000 & 1.0000 \\
        25 & 15 & 250 & 1170.88 & 97.69 & 3.83 & 305.62$\times$ & 25.50$\times$ & 1.0000 & 1.0000 \\
        26 & 15 & 250 & 863.77 & 40.66 & 4.19 & 206.36$\times$ & 9.71$\times$ & 1.0000 & 1.0000 \\
        27 & 15 & 250 & 5092.72 & 132.72 & 26.02 & 195.70$\times$ & 5.10$\times$ &\textbf{0.6632} & 1.0000 \\
        28 & 15 & 250 & 1628.00 & 47.17 & 2.94 & 552.83$\times$ & 16.02$\times$ & 1.0000 & 1.0000 \\
        29 & 15 & 250 & 927.36 & 60.23 & 3.35 & 276.78$\times$ & 17.97$\times$ & 1.0000 & 1.0000 \\
        30 & 12 & 190 & 112.77 & 22.16 & 1.88 & 60.08$\times$ & 11.81$\times$ & 1.0000 & 1.0000 \\
        31 & 15 & 250 & 873.79 & 59.71 & 3.43 & 254.43$\times$ & 17.39$\times$ & 1.0000 & 1.0000 \\
        32 & 14 & 161 & 616.49 & 38.03 & 3.92 & 157.43$\times$ &  9.71$\times$ &\textbf{0.9960} & \textbf{0.9972} \\
        33 & 15 & 250 & 635.82 & 43.80 & 2.47 & 257.08$\times$ & 17.71$\times$ & 1.0000 & 1.0000 \\
        34 & 15 & 250 & 13895.15 & 1571.86 & 48.12 & 288.75$\times$ & 32.66$\times$ & 1.0000 & 1.0000 \\
        35 & 15 & 250 & 4585.38 & 329.73 & 13.31 & 344.42$\times$ & 24.77$\times$ & 1.0000 & 1.0000 \\
        36 & 14 & 222 & 5148.99 & 703.62 & 51.25 & 100.47$\times$ & 13.73$\times$ & 1.0000 & 1.0000 \\
        37 & 12 & 189 & 803.50 & 271.31 & 19.21 & 41.84$\times$ & 14.13$\times$ & 1.0000 & 1.0000 \\
        38 & 15 & 250 & 8173.09 & 435.40 & 11.08 & 737.79$\times$ & 39.30$\times$ & 1.0000 & 1.0000 \\
        39 & 15 & 250 & 4994.74 & 274.40 & 30.00 & 166.48$\times$ & 9.15$\times$ & 1.0000 & 1.0000 \\
        40 & 15 & 250 & 8262.25 & 750.85 & 14.24 & 580.18$\times$ & 52.73$\times$ & 1.0000 & 1.0000 \\
        41 & 15 & 250 & 7457.87 & 575.48 & 19.64 & 379.66$\times$ & 29.30$\times$ & 1.0000 & 1.0000 \\
        42 & 15 & 250 & 12244.84 & 416.23 & 21.28 & 575.41$\times$ & 19.56$\times$ & \textbf{0.9810} & \textbf{0.9810} \\
        43 & 14 & 249 & 6141.19 & 749.03 & 25.92 & 236.96$\times$ & 28.90$\times$ & 1.0000 & 1.0000 \\
        44 & 15 & 250 & 2787.38 & 234.07 & 5.36 & 520.03$\times$ & 43.67$\times$ & 1.0000 & 1.0000 \\
        45 & 15 & 250 & 5375.15 & 323.12 & 16.36 & 328.50$\times$ & 19.75$\times$ & 1.0000 & 1.0000 \\
        46 & 15 & 250 & 7638.80 & 482.55 & 9.77 & 781.67$\times$ & 49.38$\times$ & 1.0000 & 1.0000 \\
        47 & 15 & 250 & 3976.34 & 98.41 & 7.73 & 514.28$\times$ & 12.73$\times$ & 1.0000 & 1.0000 \\
        48 & 15 & 250 & 10081.27 & 518.74 & 11.49 & 877.30$\times$ & 45.14$\times$ & 1.0000 & 1.0000 \\
        49 & 15 & 250 & 9779.81 & 680.07 & 10.91 & 896.00$\times$ & 62.31$\times$ & 1.0000 & 1.0000 \\
        50 & 15 & 250 & 3927.05 & 157.04 & 16.47 & 238.37$\times$ & 9.53$\times$ & 1.0000 & 1.0000 \\
        \hline
        \end{tabular}
    \end{minipage}\hfill
    \begin{minipage}[h]{0.49\textheight}
        \centering
        \begin{tabular}{|c|c|c|c|c|c|c|c|c|c|}
        \hline
        \multirow{2}{*}{$\graph$} & \multirow{2}{*}{$|\SetActors_{\textrm{SDF}}|$} & \multirow{2}{*}{$|\SetActors|$} &
        \multicolumn{3}{c|}{Exploration Time [s]} &
        \multicolumn{2}{c|}{Speedup} &
        \multicolumn{2}{c|}{HV ratio vs XS} \\
        \cline{4-10} & & &
        $\DecisionVector$ Sweep & $P$ Sweep & \gls{hnf} &
        vs $\DecisionVector$ Sweep & vs $P$ Sweep &
        PS/XS & HNF/XS \\ \hline
        51 & 15 & 250 & 10082.38 & 725.37 & 22.02 & 457.84$\times$ & 32.94$\times$ & 1.0000 & 1.0000 \\
        52 & 15 & 250 & 6480.28 & 368.92 & 15.08 & 429.85$\times$ & 24.47$\times$ & 1.0000 & 1.0000 \\
        53 & 14 & 249 & 5735.62 & 554.96 & 28.86 & 198.74$\times$ & 19.23$\times$ & 1.0000 & 1.0000 \\
        54 & 15 & 250 & 12753.47 & 1272.70 & 23.15 & 551.01$\times$ & 54.99$\times$ & 1.0000 & 1.0000 \\
        55 & 14 & 249 & 2124.77 & 192.41 & 9.24 & 229.92$\times$ & 20.82$\times$ & 1.0000 & 1.0000 \\
        56 & 15 & 250 & 12361.66 & 1200.01 & 21.05 & 587.20$\times$ & 57.00$\times$ & 1.0000 & 1.0000 \\
        57 & 14 & 249 & 9235.70 & 1251.33 & 27.89 & 331.14$\times$ & 44.87$\times$ & 1.0000 & 1.0000 \\
        58 & 13 & 140 & 1081.66 & 197.33 & 15.37 & 70.37$\times$ & 12.84$\times$ & 1.0000 & 1.0000 \\
        59 & 14 & 236 & 8159.55 & 430.24 & 19.01 & 429.33$\times$ & 22.64$\times$ & 1.0000 & 1.0000 \\
        60 & 14 & 244 & 3197.37 & 253.62 & 16.45 & 194.39$\times$ & 15.42$\times$ & 1.0000 & 1.0000 \\
        61 & 15 & 250 & 13894.37 & 969.59 & 26.92  & 516.14$\times$ & 36.02$\times$ & 1.0000 & 1.0000 \\
        62 & 14 & 249 & 3470.66 & 503.46 & 33.94 & 102.26$\times$ & 14.83$\times$ & 1.0000 & 1.0000 \\
        63 & 15 & 250 & 5767.06 & 334.33 & 22.44 & 256.98$\times$ & 14.90$\times$ & 1.0000 & 1.0000 \\
        64 & 15 & 250 & 4987.55 & 247.16 & 21.02 & 237.30$\times$ & 11.76$\times$ & 1.0000 & 1.0000 \\
        65 & 14 & 245 & 2516.95 & 342.46 & 14.61 & 172.31$\times$ & 23.45$\times$ & 1.0000 & 1.0000 \\
        66 & 15 & 250 & 8489.80 & 620.04 & 22.07 & 384.76$\times$ & 28.10$\times$ & 1.0000 & 1.0000 \\
        67 & 14 & 249 & 5888.91 & 765.66 & 64.04 & 91.96$\times$ & 11.96$\times$ & 1.0000 & 1.0000 \\
        68 & 12 & 12 & 52.34 & 37.24 & 1.57 & 33.36$\times$ & 23.73$\times$ & 1.0000 & 1.0000 \\
        69 & 15 & 250 & 5730.75 & 265.22 & 11.25 & 509.47$\times$ & 23.58$\times$ & 1.0000 & 1.0000 \\
        70 & 15 & 250 & 10716.24 & 928.26 & 15.56 & 688.69$\times$ & 59.66$\times$ & 1.0000 & 1.0000 \\
        71 & 15 & 250 & 4153.86 & 144.64 & 7.22 & 575.20$\times$ & 20.03$\times$ & 1.0000 & 1.0000 \\
        72 & 14 & 223 & 2779.05 & 239.79 & 15.02 & 185.07$\times$ & 15.97$\times$ & 1.0000 & 1.0000 \\
        73 & 14 & 237 & 2579.97 & 320.48 & 15.79 & 163.43$\times$ & 20.30$\times$ & 1.0000 & 1.0000 \\
        74 & 15 & 250 & 7680.95 & 349.71 & 10.93 & 702.47$\times$ & 31.98$\times$ & 1.0000 & 1.0000 \\
        75 & 15 & 250 & 5870.87 & 543.69 & 28.83 & 203.67$\times$ & 18.86$\times$ & 1.0000 & 1.0000 \\
        76 & 14 & 244 & 11226.92 & 1619.76 & 37.42 & 300.05$\times$ & 43.29$\times$ & 1.0000 & 1.0000 \\
        77 & 15 & 250 & 4549.64 & 264.83 & 15.32 & 296.99$\times$ & 17.29$\times$ & \textbf{0.9798} & 1.0000 \\
        78 & 13 & 244 & 6502.73 & 1451.84 & 38.15 & 170.45$\times$ & 38.06$\times$ & 1.0000 & 1.0000 \\
        79 & 15 & 250 & 7687.53 & 543.28 & 23.66 & 324.95$\times$ & 22.96$\times$ & 1.0000 & 1.0000 \\
        80 & 15 & 250 & 12934.24 & 1261.33 & 16.03 & 806.68$\times$ & 78.67$\times$ & 1.0000 & 1.0000 \\
        81 & 15 & 250 & 5595.85 & 239.72 & 9.43 & 593.66$\times$ & 25.43$\times$ & 1.0000 & 1.0000 \\
        82 & 14 & 200 & 3211.65 & 293.39 & 15.08 & 212.92$\times$ & 19.45$\times$ & 1.0000 & 1.0000 \\
        83 & 15 & 250 & 4318.12 & 228.29 & 11.35 & 380.54$\times$ & 20.12$\times$ & 1.0000 & 1.0000 \\
        84 & 14 & 249 & 2382.70 & 257.88 & 15.06 & 158.22$\times$ & 17.12$\times$ & 1.0000 & 1.0000 \\
        85 & 15 & 250 & 8255.24 & 467.37 & 13.11 & 629.90$\times$ & 35.66$\times$ & 1.0000 & 1.0000 \\
        86 & 15 & 250 & 14199.51 & 916.18 & 14.85 & 956.30$\times$ & 61.70$\times$ & 1.0000 & 1.0000 \\
        87 & 15 & 250 & 8540.53 & 245.32 & 13.49 & 632.95$\times$ & 18.18$\times$ & 1.0000 & 1.0000 \\
        88 & 8 & 119 & 11.07 & 76.67 & 3.96 & 2.79$\times$ & 19.36$\times$ & 1.0000 & 1.0000 \\
        89 & 15 & 250 & 8907.12 & 672.21 & 15.13 & 588.85$\times$ & 44.44$\times$ & 1.0000 & 1.0000 \\
        90 & 15 & 250 & 11812.49 & 935.24 & 17.35 & 681.01$\times$ & 53.92$\times$ & 1.0000 & 1.0000 \\
        91 & 15 & 250 & 6363.85 & 242.23 & 23.05 & 276.14$\times$ & 10.51$\times$ & \textbf{0.9992} & 1.0000 \\
        92 & 14 & 249 & 3062.48 & 322.49 & 21.09 & 145.21$\times$ & 15.29$\times$ & 1.0000 & 1.0000 \\
        93 & 15 & 250 & 3351.61 & 91.93 & 11.20 & 299.13$\times$ & 8.20$\times$ & 1.0000 & 1.0000 \\
        94 & 14 & 249 & 3156.01 & 379.96 & 18.10 & 174.35$\times$ & 20.99$\times$ & 1.0000 & 1.0000 \\
        95 & 14 & 167 & 1165.72 & 187.72 & 7.69 & 151.50$\times$ & 24.40$\times$ & 1.0000 & 1.0000 \\
        96 & 15 & 250 & 9583.05 & 704.82 & 32.31 & 296.57$\times$ & 21.81$\times$ & 1.0000 & 1.0000 \\
        97 & 15 & 250 & 9617.25 & 420.46 & 15.89 & 605.23$\times$ & 26.46$\times$ & 1.0000 & 1.0000 \\
        98 & 15 & 250 & 4577.22 & 245.29 & 9.05 & 505.84$\times$ & 27.11$\times$ & 1.0000 & 1.0000 \\
        99 & 13 & 178 & 714.50 & 89.21 & 6.24 & 114.42$\times$ & 14.29$\times$ & 1.0000 & 1.0000 \\
        100 & 15 & 250 & 10413.67 & 459.30 & 11.11 & 937.48$\times$ & 41.35$\times$ & 1.0000 & 1.0000 \\
        \hline
        \end{tabular}
    \end{minipage}

    \caption{\ac{DSE} analysis for 100 randomly generated dataflow graphs}
    \label{tbl:appendix1}
\end{sidewaystable*}

\end{document}